\theoremstyle{plain}
\newtheorem{theorem}{Theorem}
\newtheorem{prop}{Proposition}
\newtheorem{lemma}{Lemma}
\newtheorem{cor}{Corollary}
\theoremstyle{definition}
\newtheorem{defi}{Definition}
\newcommand{\beq}{\begin{equation}}
\newcommand{\eeq}{\end{equation}}
\newcommand{\nn}{\nonumber}
\newcommand{\ZZ}{{\mathbb Z}}
\newcommand{\QQ}{{\mathbb Q}}
\newcommand{\CC}{{\mathbb C}}
\newcommand{\A}{{\mathcal A}}
\newcommand{\tr}{{\rm tr}}
\newcommand{\p}{\partial}
\def\={\; = \;}
\def\+{\; + \;}
\def\:={\, := \, }
\title[Matrix resolvents and Ablowitz--Ladik hierarchy]{Tau-functions for 
the Ablowitz--Ladik hierarchy: the matrix-resolvent method}
\author{Mattia Cafasso}
\address{Mattia Cafasso, Univ Angers, CNRS, LAREMA, SFR MATHSTIC, F-49000 Angers, France}
\email{cafasso@math.univ-angers.fr}
\author{Di Yang}
\address{Di Yang, School of Mathematical Sciences, University of Science and Technology of China, Hefei 230026, P.R.~China}
\email{diyang@ustc.edu.cn}
\begin{document}
\maketitle
\begin{abstract}
We extend the matrix-resolvent method for computing logarithmic derivatives of tau-functions to the Ablowitz--Ladik hierarchy. 
In particular, we derive a formula for the generating series of the logarithmic derivatives 
 of an arbitrary tau-function in terms of matrix resolvents. 
As an application, we provide a way of computing certain integrals over the unitary group. 
\end{abstract}

\tableofcontents

\section{Introduction}

The Ablowitz--Ladik (AL) hierarchy has been introduced in the 1970s as a spatial discretization of the AKNS-D hierarchy \cite{AL1,AL2}. Nowadays, it plays an important role in many different areas of mathematical physics. Without being exhaustive, let us recall that its algebro-geometric solutions have been studied in \cite{GHMT, MEKL}. In \cite{AvM2,AvM3}, the same hierarchy is studied under the name of Toeplitz lattice, and 
its tau functions are used to study combinatorial models and integrals over the unitary group. Tau functions for the Ablowitz-Ladik hiearchy 
appear also as correlation functions in quantum spin chains \cite{IIKS}. In \cite{Brini,BCR,LYZZ}, the AL hierarchy is studied to investigate 
the theory of equivariant Gromov-Witten invariants for the resolved conifold of $\mathbb{P}^1$ under the anti-diagonal action. 
In~\cite{LLQZ} the tri-hamiltonian structure of the AL hierarchy is found.
Relations to melting crystal models are treated in~\cite{T}.\\

Recently, the matrix-resolvent method for studying tau-structures (in the sense of \cite{DZ-norm}) 
of differential integrable systems was introduced in~\cite{BDY16,BDY21,Zhou}.
This method was extended to certain differential-difference integrable systems in~\cite{DY17, DY20, Y}.
Our aim is to further extend the matrix-resolvent method to the study of the tau-structure of the AL hierarchy.

We start by giving an algebraic construction of the hierarchy. Let us denote by~$\A$ the polynomial ring generated by infinitely many indeterminates
\[
\A := \QQ \bigl[q_n, r_n, q_{n\pm 1}, r_{n\pm1}, q_{n\pm 2}, r_{n\pm 2}, \dots\bigr] \,.
\] 
Define the shift operator~$\Lambda:\A\to \A$ as the (invertible) linear operator satisfying  
\begin{align}
& \Lambda (q_{n+i})\=q_{n+i+1}\,, \qquad \Lambda (r_{n+i})\=r_{n+i+1}\,, \qquad i \in \ZZ \,,\nn\\
& \Lambda (fg)\= \Lambda (f) \, \Lambda (g)\,, \quad \forall\,f,g\in \A\,. \nn
\end{align}
The original Lax operator of the AL hierarchy is given by
\beq 
\mathcal{L} \:= \Lambda \+ \mathcal{U}(z) \, ,
\eeq
where $z$ plays the role of the spectral parameter and 
\begin{align}
& \mathcal{U}(z) \:= \left(\begin{array}{cc} z& 0\\ 0 & z^{-1}\end{array}\right) \+ \left(\begin{array}{cc} 0 & r_n\\  q_n & 0\end{array}\right) \, . \label{unold} 
\end{align}

For our purposes, it is more convenient to use another Lax operator \cite{MEKL} (see also~\cite{GHMT})
$$
 L := \Lambda \+ U(\lambda) \,,
$$
where $\lambda$ is the new spectral parameter and
\begin{align}
& U(\lambda) \:= \left(\begin{array}{cc} \lambda & r_n \\ \lambda q_n & 1\end{array}\right) \,. \label{un} 
\end{align}

As already observed in \cite{MEKL} (cf.~also~\cite{KMZ}), the two Lax operators are related through the gauge transformation
\beq
U(\lambda) \= G_{n+1}(z) \, \mathcal{U}(z) \, G_n(z)^{-1}\,, \quad \text{with} \quad G_n(z) \:= z^n \begin{pmatrix} 1 & 0 \\ 0 & z \end{pmatrix}
\eeq
and $\lambda = z^2$.

We now recall the definition of matrix resolvents (cf.~\cite{Dickey}) associated to~$L$ (of course, analogously, one can also define the matrix resolvents associated to~$\mathcal L$).
\begin{defi}
	A matrix $R \in {\rm Mat} \bigl(2,\mathcal{A}[[\lambda]]\bigr)$ is a \emph{matrix resolvent} of $L$  if it satisfies the equation
	\begin{equation}\label{eqres}
		\Lambda \bigl(R(\lambda)\bigr) \, U(\lambda) \,-\, U(\lambda) \, R(\lambda) \= 0 \,.
	\end{equation}
\end{defi}
In the following, we will just speak about matrix resolvents, without further specifying that they are associated to $L$.
\begin{lemma}\label{mrdef12}
There exists a unique matrix resolvent  
$R_-(\lambda) \in {\rm Mat} \bigl(2,\mathcal{A}[[\lambda]]\bigr)$
satisfying the normalization conditions
\begin{align}
& R_-(\lambda) \,-\,  \left(\begin{array}{cc} 0 & r_{n-1} \\ 0 & 1 \end{array}\right) \; \in \;  \lambda \, {\rm Mat} \bigl(2,\mathcal{A}[[\lambda]]\bigr) \, ,  \label{normalize11} \\
& {\rm tr} \, R_-(\lambda) \= 1 \,, \qquad  \det R_-(\lambda) \=  0 \,. \label{normalize12}
\end{align}
Similarly, there exists a unique matrix resolvent
$R_+(\lambda) \in {\rm Mat} \bigl(2,\mathcal{A}[[\lambda^{-1}]]\bigr)$ 
satisfying the normalization conditions
\begin{align}
& R_+(\lambda) 
\,-\,  \left(\begin{array}{cc} 1 & 0 \\ q_{n-1} & 0 \end{array}\right) \; \in \;  \lambda^{-1} \, {\rm Mat} \bigl(2,\mathcal{A}[[\lambda^{-1}]]\bigr) \, ,  \label{normalize21} \\
& {\rm tr} \, R_+(\lambda) \= 1 \,, \qquad  \det R_+(\lambda) \=  0 \,. \label{normalize22}
\end{align}
\end{lemma}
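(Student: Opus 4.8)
The plan is to construct the coefficients of $R_-$ and $R_+$ recursively; I carry this out for $R_-$, the case of $R_+$ being obtained by expanding in powers of $\lambda^{-1}$ rather than $\lambda$, swapping the roles of $U_0:=\left(\begin{smallmatrix}0&r_n\\0&1\end{smallmatrix}\right)$ and $U_1:=\left(\begin{smallmatrix}1&0\\q_n&0\end{smallmatrix}\right)$, and replacing $(r_n,r_{n-1})$ by $(q_n,q_{n-1})$. Write $U(\lambda)=U_0+\lambda U_1$ and $R_-(\lambda)=\sum_{k\ge0}R_k\lambda^k$ with $R_k\in{\rm Mat}(2,\A)$. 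Substituting into \eqref{eqres} and comparing coefficients of $\lambda^k$ gives $\Lambda(R_0)U_0=U_0R_0$ at order $0$, which together with \eqref{normalize11} forces $R_0=\left(\begin{smallmatrix}0&r_{n-1}\\0&1\end{smallmatrix}\right)$ (one checks directly that this matrix solves the order-$0$ equation), and, for $k\ge1$,
\[
\Lambda(R_k)\,U_0-U_0\,R_k\;=\;U_1R_{k-1}-\Lambda(R_{k-1})\,U_1\;=:\;M_{k-1}\,.
\]
Two facts about an arbitrary matrix resolvent $R$ will drive the recursion: taking $\det$ of $\Lambda(R)U=UR$ and cancelling $\det U=\lambda(1-r_nq_n)$, a non-zero-divisor of the integral domain $\A[[\lambda]]$, gives $\Lambda(\det R)=\det R$, while over $\mathrm{Frac}(\A)((\lambda))$, where $U$ is invertible, $\Lambda(R)=URU^{-1}$ and hence $\Lambda(\tr R)=\tr R$; since $\A^{\Lambda}=\QQ$, both $\tr R$ and $\det R$ lie in $\QQ[[\lambda]]$, so the normalizations \eqref{normalize12} are consistent and, by Cayley--Hamilton, any resolvent with $\tr R=1$, $\det R=0$ satisfies $R^2=R$.

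For uniqueness, suppose $R$ and $R'$ both satisfy \eqref{eqres}, \eqref{normalize11}, \eqref{normalize12}, and set $D:=R-R'$. Then $D\in\lambda\,{\rm Mat}(2,\A[[\lambda]])$, $\tr D=0$, $\Lambda(D)U=UD$ by linearity of \eqref{eqres}, and $RD+DR'=R^2-R'^2=D$. If $D\ne 0$, write $D=\lambda^m D_m+O(\lambda^{m+1})$ with $m\ge1$ and $D_m\ne0$; comparing lowest orders gives $\Lambda(D_m)U_0=U_0D_m$ and $R_0D_m+D_mR_0=D_m$. With $D_m=\left(\begin{smallmatrix}a&b\\c&d\end{smallmatrix}\right)$: $\tr D_m=0$ gives $d=-a$; $R_0D_m+D_mR_0=D_m$ gives $a=r_{n-1}c$; the $(1,1)$-entry of $\Lambda(D_m)U_0=U_0D_m$ reads $0=r_nc$, so $c=0$ (as $r_n$ is a non-zero-divisor in $\A$) and hence $a=d=0$; its $(1,2)$-entry then reads $\Lambda(b)=0$, so $b=0$. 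Thus $D_m=0$, a contradiction, and $R=R'$.

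For existence I would induct on $k$: assume $R_0,\dots,R_{k-1}\in{\rm Mat}(2,\A)$ solve \eqref{eqres} to order $k-1$, have the prescribed traces, and satisfy $\det\bigl(\sum_{j<k}R_j\lambda^j\bigr)\in\lambda^k\,\A[[\lambda]]$. Writing $R_k=\left(\begin{smallmatrix}a_k&b_k\\c_k&d_k\end{smallmatrix}\right)$, I would define $c_k$ from the $(2,1)$-entry of the order-$k$ equation (which reads $-c_k=(M_{k-1})_{21}$, so no division is needed), $a_k$ from the vanishing of the $\lambda^k$-coefficient of $\det R_-$ (which expresses $a_k$ polynomially in $c_k$ and $R_0,\dots,R_{k-1}$), $d_k:=-a_k$ so that $\tr R_k=0$, and $b_k$ as $\Lambda^{-1}$ of the quantity delivered by the $(1,2)$-entry of the order-$k$ equation (legitimate, as $\Lambda$ is invertible on $\A$). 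It then remains to check the $(1,1)$- and $(2,2)$-entries of the order-$k$ equation, i.e.\ that the error $E:=\Lambda(\tilde R)U-U\tilde R$ of the truncation $\tilde R:=\sum_{j\le k}R_j\lambda^j$, which by construction has the form $\lambda^k\,{\rm diag}(e_k^{11},e_k^{22})+O(\lambda^{k+1})$, satisfies $e_k^{11}=e_k^{22}=0$. For $e_k^{11}$ I would compare $\det(\Lambda(\tilde R)U)=\Lambda(\det\tilde R)\det U\in\lambda^{k+2}\A[[\lambda]]$ with $\det(U\tilde R+E)=\det(U\tilde R)+\tr\bigl(\mathrm{adj}(U\tilde R)\,E\bigr)+\det E$: since $\det\tilde R\in\lambda^{k+1}\A[[\lambda]]$, $\det(U\tilde R)\in\lambda^{k+2}\A[[\lambda]]$, $\det E\in\lambda^{2k}\A[[\lambda]]$, and $\mathrm{adj}(U_0R_0)=\left(\begin{smallmatrix}1&-r_n\\0&0\end{smallmatrix}\right)$, the $\lambda^k$-coefficient of the middle term equals $e_k^{11}$ and must therefore vanish. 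The vanishing of $e_k^{22}$, after substituting the definitions of $c_k$, $a_k$, $d_k$, reduces to a polynomial identity among $R_0,\dots,R_{k-1}$, and \emph{this is the step I expect to be the main obstacle}: it is settled by a telescoping computation (or by using $\Lambda$-invariance of $\det\tilde R$ modulo $\lambda^{k+1}$ together with the order-$<k$ relations). Granting it, $R_k$ completes the induction, and letting $k\to\infty$ produces $R_-$; $R_+$ is obtained identically under the substitutions listed at the outset.
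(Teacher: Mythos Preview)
Your approach is essentially the same as the paper's --- recursive determination of the power-series coefficients from the resolvent equation together with the trace/determinant constraints --- but the presentations differ in useful ways. The paper writes $R_-=\left(\begin{smallmatrix}a_-&b_-\\c_-&1-a_-\end{smallmatrix}\right)$ and records the four entrywise equations \eqref{a11}--\eqref{a14} plus the $\det$-condition \eqref{a15}, then argues in one sentence that \eqref{a11}--\eqref{a14} with \eqref{normalize11} determine $R_-$ up to integration constants, which \eqref{a15} fixes, and that existence ``follows from a careful verification of the compatibility''. You instead work at the matrix level and bring in the projector identity $R^2=R$ (Cayley--Hamilton from $\tr R=1$, $\det R=0$); your uniqueness argument via $R_0D_m+D_mR_0=D_m$ at the lowest nonvanishing order is sharper and more self-contained than the paper's sentence. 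For existence, your inductive scheme (solve $(2,1)$ for $c_k$, then the $\det$-coefficient for $a_k$, set $d_k=-a_k$, solve $(1,2)$ for $b_k$, then check the diagonal entries) makes explicit what the paper hides behind ``compatibility'', and your adjugate/determinant computation for $e_k^{11}=0$ is a genuine addition. The remaining $(2,2)$-check you flag is exactly the ``careful verification'' the paper leaves unproved as well, so you are not missing anything relative to the paper; if you want to close it cleanly, the most economical route is to note that the four entrywise relations \eqref{a11}--\eqref{a14} are not independent once \eqref{a15} holds (adding $\lambda\cdot$\eqref{a11} to \eqref{a14} and using \eqref{a12}--\eqref{a13} yields a relation equivalent to $(\Lambda-1)$ of \eqref{a15}), which at order~$\lambda^k$ gives $e_k^{22}$ in terms of quantities already shown to vanish.
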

\noindent The proof is given in Section~\ref{section2}. 
The  matrix resolvents $R_{\alpha}(\lambda)$, $\alpha \in \{+,-\}$,  uniquely defined by Lemma \ref{mrdef12}, will be called {\it basic matrix resolvents}.
We remark that the second conditions in~\eqref{normalize12}--\eqref{normalize22} can be equivalently written as
\begin{align}
\label{MRprod1122}  \tr \left(R_+(\lambda)^2\right) \= \tr \left(R_-(\lambda)^2\right) \= 1 \,.
\end{align}

For the reader's convenience, the first two terms of the basic matrix resolvent read as follows: 
\begin{align}
& R_-(\lambda) \= \left(\begin{array}{cc} 0 & r_{n-1}\\ 0 & 1\end{array}\right) \+ 
\lambda \left( \begin{array}{cc} q_{n} r_{n-1} & -q_n r_{n-1}^2+r_{n-2} (1-q_{n-1}r_{n-1}) \\ q_n & -q_n r_{n-1}\end{array}\right) \+ 
\mathcal{O}(\lambda^2) \,,  \nn\\
& R_+(\lambda) \= \left(\begin{array}{cc} 1 & 0\\ q_{n-1} & 0\end{array}\right) 
\+ \frac1{\lambda}\left(\begin{array}{cc} -q_{n-1} r_n & r_n\\  q_{n-2}(1-q_{n-1}r_{n-1})-q_{n-1}^2r_n &  q_{n-1}r_n\end{array}\right)  
\+ \mathcal{O}(\lambda^{-2}) \nn \,.
\end{align}
It will be convenient to define series $a_{\alpha}(\lambda), b_{\alpha}(\lambda), c_{\alpha}(\lambda)$, $\alpha \in \{+,-\}$, and 
an infinite set of polynomials $\{ a_{\alpha,p},b_{\alpha,p},c_{\alpha,p} \in \mathcal{A} \, | \, \alpha = \pm, p \geq 0\}$ via 
\begin{align}
& R_-(\lambda) \= \begin{pmatrix} 0 & 0 \\ 0 & 1 \\ \end{pmatrix} \+ 
\begin{pmatrix} a_{-}(\lambda) & b_{-}(\lambda) \\ c_{-}(\lambda) &  -a_{-}(\lambda) \\ \end{pmatrix} \= 
\begin{pmatrix} 0 & 0 \\ 0 & 1 \\ \end{pmatrix} \+ \sum_{p\geq 0} \lambda^{p} 
\begin{pmatrix} a_{-,p} & b_{-,p} \\ c_{-,p} &  -a_{-,p} \end{pmatrix} \,,  \label{R1expand}\\
& R_+(\lambda) \= \begin{pmatrix} 1 & 0 \\ 0 & 0\end{pmatrix} \+ 
\begin{pmatrix} a_+(\lambda) & b_+(\lambda) \\ c_{+}(\lambda) &  -a_+(\lambda) \end{pmatrix} \=
\begin{pmatrix} 1 & 0\\ 0 & 0 \\ \end{pmatrix} \+ \sum_{p\geq 0} \frac1{\lambda^{p}} 
\begin{pmatrix} a_{+,p} & b_{+,p} \\  c_{+,p} &  -a_{+,p} \end{pmatrix} \,. \label{R2expand}
\end{align}
In particular, $a_{-,0}=a_{+,0}=0$, $b_{-,0}=r_{n-1}$, $c_{-,0}=0$, $b_{+,0}=0$, $c_{+,0}=q_{n-1}$.

We also define matrices $V_{\alpha,p}(\lambda)$, $\alpha=+,-$, $p\geq0$ by
\begin{align}
& V_{-,p}(\lambda) \= \left(\lambda^{-p} R_-(\lambda)\right)_{\leq 0} \,-\, \begin{pmatrix} a_{-,p} & b_{-,p}\\ 0 & 0 \\ \end{pmatrix} \, , \label{vplus}  \\
& V_{+,p}(\lambda) \= \left(\lambda^{p}R_+(\lambda)\right)_{\geq 0} \,-\, \begin{pmatrix} 0 & 0\\ c_{+,p} & -a_{+,p} \end{pmatrix} \, ,  \label{vminus}
\end{align}
where $(\,)_{\leq 0}$ denotes the polynomial part in~$\lambda^{-1}$ of the argument, and $(\,)_{\geq 0}$ the polynomial part in~$\lambda$.\\

We now want to define an infinite set of derivations $\{ D_{\alpha,p}, \alpha = \pm, p \geq 0 \}$ on the ring $\A[[\lambda]]$ which are {\it admissible}, i.e. they commute with the operator $\Lambda$. In order to do so, it suffices to define how $D_{\alpha,p}$ acts on $\lambda$, $q_n$ and $r_n$, and then 
extend its action using the Leibnitz rule and admissibility. In the proposition below, $D_{\alpha,p}$ acts on matrices entrywise. 
\begin{prop}\label{propDAL} 
The equations
\beq\label{DAL}
D_{\alpha,p} \, U \:= 
\Lambda \bigl(V_{\alpha,p}\bigr) \, U \,-\, U  \, V_{\alpha,p} \, , \quad \alpha = \pm\,, ~ p \geq 0
\eeq
uniquely define a set of admissible derivations whose restriction to $\mathcal A$ is well defined and such that 
$D_{\alpha,p} \, \lambda = 0$ for $\alpha = \pm$ and $p \geq 0$.
\end{prop}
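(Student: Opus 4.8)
The plan is to reduce Proposition~\ref{propDAL} to a structural statement about the matrix entries of the right-hand side of \eqref{DAL}, and then to prove that statement by squeezing its $\lambda$-support from both sides. Extend $\Lambda$ to $\A[[\lambda]]$ by $\Lambda(\lambda)=\lambda$. Since $\{q_{n+i},r_{n+i}\}_{i\in\ZZ}$ are free polynomial generators of $\A$, any choice of $D q_n,\,D r_n\in\A$ together with $D\lambda=0$ extends uniquely to an admissible derivation of $\A[[\lambda]]$ whose restriction to $\A$ is well defined: admissibility forces $D q_{n+i}:=\Lambda^i(D q_n)$ and $D r_{n+i}:=\Lambda^i(D r_n)$, this is consistent because the generators are algebraically independent, and the resulting derivation commutes with $\Lambda$ since it does so on generators. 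As $D_{\alpha,p}\lambda=0$ (and $D_{\alpha,p}1=0$), the left-hand side of \eqref{DAL} is forced to be $\left(\begin{smallmatrix}0 & D_{\alpha,p}r_n\\ \lambda\,D_{\alpha,p}q_n & 0\end{smallmatrix}\right)$. Hence it suffices to show that, for each $\alpha=\pm$ and $p\ge0$, the matrix $\Lambda(V_{\alpha,p})\,U-U\,V_{\alpha,p}$ has vanishing diagonal, has its $(1,2)$-entry independent of $\lambda$ (hence in $\A$), and has its $(2,1)$-entry in $\lambda\A$; one then reads off $D_{\alpha,p}r_n$ and $D_{\alpha,p}q_n$, and uniqueness follows because an admissible derivation killing $\lambda$ is determined by its values on $q_n,r_n$.

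To prove this, write $U=\lambda U_1+U_0$ with $U_1=\left(\begin{smallmatrix}1&0\\ q_n&0\end{smallmatrix}\right)$, $U_0=\left(\begin{smallmatrix}0&r_n\\ 0&1\end{smallmatrix}\right)$, and treat $\alpha=-$. On one hand, by \eqref{vplus} the matrix $V_{-,p}$ is a polynomial in $\lambda^{-1}$ over $\mathrm{Mat}(2,\A)$ with $\lambda$-degrees in $\{-p,\dots,0\}$, so $\Lambda(V_{-,p})U-U V_{-,p}$ has $\lambda$-degrees $\le1$, and its $\lambda^1$-coefficient equals $\Lambda(V^{(0)})U_1-U_1V^{(0)}$, where $V^{(0)}$ is the $\lambda^0$-coefficient of $V_{-,p}$. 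On the other hand, multiplying \eqref{eqres} for $R=R_-$ by the scalar $\lambda^{-p}$ and writing $\lambda^{-p}R_-=V_{-,p}+E_{-,p}+W_{-,p}$, where $E_{-,p}=\left(\begin{smallmatrix}a_{-,p}&b_{-,p}\\ 0&0\end{smallmatrix}\right)$ is the $\lambda$-free matrix subtracted in \eqref{vplus} and $W_{-,p}:=(\lambda^{-p}R_-)_{>0}$ has strictly positive $\lambda$-degrees, one obtains
\[
\Lambda(V_{-,p})U-U V_{-,p}\=-\bigl(\Lambda(E_{-,p})U-U E_{-,p}\bigr)\m\bigl(\Lambda(W_{-,p})U-U W_{-,p}\bigr),
\]
whose right-hand side has $\lambda$-degrees $\ge0$. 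Thus $\Lambda(V_{-,p})U-U V_{-,p}$ is supported in $\lambda$-degrees $\{0,1\}$, and its $\lambda^0$-coefficient is $-\bigl(\Lambda(E_{-,p})U_0-U_0E_{-,p}\bigr)$.

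It remains to check that these two extreme coefficients are triangular of the right shape. From \eqref{R1expand}--\eqref{vplus} one reads off $V^{(0)}=\left(\begin{smallmatrix}0&0\\ c_{-,p}&\ast\end{smallmatrix}\right)$ with vanishing first row ($\ast=-a_{-,p}$, to which $1$ is added when $p=0$); since $U_1$ has vanishing second column, $\Lambda(V^{(0)})U_1-U_1V^{(0)}=\left(\begin{smallmatrix}0&0\\ \ast&0\end{smallmatrix}\right)$ with lower-left entry in $\A$. Likewise $E_{-,p}$ has vanishing second row and $U_0$ has vanishing first column, so $\Lambda(E_{-,p})U_0-U_0E_{-,p}=\left(\begin{smallmatrix}0&\ast\\ 0&0\end{smallmatrix}\right)$ with upper-right entry in $\A$; concretely $D_{-,p}r_n=-\Lambda(a_{-,p})\,r_n-\Lambda(b_{-,p})$. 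This proves the claim for $\alpha=-$. The case $\alpha=+$ is the same argument with the roles of the $\lambda$-degrees $0$ and $1$, and of the first and second rows, interchanged: $V_{+,p}$ is a $\lambda$-polynomial of degree $p$ whose $\lambda^0$-coefficient $\left(\begin{smallmatrix}a_{+,p}&b_{+,p}\\ 0&0\end{smallmatrix}\right)$ has vanishing second row, which gives directly the $\lambda^0$-coefficient of the commutator in the form $\left(\begin{smallmatrix}0&\ast\\ 0&0\end{smallmatrix}\right)$ and the lower bound $\ge0$ on degrees; and multiplying \eqref{eqres} for $R=R_+$ by $\lambda^p$ and writing $\lambda^pR_+=V_{+,p}+E_{+,p}+W_{+,p}$ with $E_{+,p}=\left(\begin{smallmatrix}0&0\\ c_{+,p}&-a_{+,p}\end{smallmatrix}\right)$ and $W_{+,p}=(\lambda^pR_+)_{<0}$ gives the upper bound $\le1$ and the $\lambda^1$-coefficient $-\bigl(\Lambda(E_{+,p})U_1-U_1E_{+,p}\bigr)=\left(\begin{smallmatrix}0&0\\ \ast&0\end{smallmatrix}\right)$.

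There is nothing deep here: the only real work is the bookkeeping of $\lambda$-degrees and the two triangular cancellations, and the point to watch is the degenerate case $p=0$, where the extreme coefficient of $V_{\alpha,p}$ picks up an extra contribution from the constant term of $R_{\mp}$. The one conceptual ingredient is the double squeeze of the $\lambda$-support — polynomial truncation on one side, the matrix-resolvent equation \eqref{eqres} on the other — which is precisely the reason the matrices $E_{\alpha,p}$ are subtracted in \eqref{vplus}--\eqref{vminus}; and since every matrix involved has entries in $\A[\lambda,\lambda^{-1}]$ and $\Lambda$ preserves $\A$, reading off the relevant entries indeed produces elements of $\A$, so the derivations restrict to $\A$ as claimed.
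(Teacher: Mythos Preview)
Your proof is correct and takes essentially the same route as the paper: both verify that the right-hand side of \eqref{DAL} has the shape $\left(\begin{smallmatrix}0 & *\\ \lambda\, * & 0\end{smallmatrix}\right)$ with entries in $\A$ by combining the resolvent equation \eqref{eqres} with the expansions \eqref{R1expand}--\eqref{R2expand}, your $\lambda$-degree squeeze simply making explicit what the paper calls a ``straightforward computation'' leading to \eqref{d2rn}--\eqref{d1qn}. (One minor slip: in your last paragraph the extra contribution at $p=0$ comes from the constant term of $R_\alpha$, not $R_{\mp}$.)
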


\noindent Proposition \ref{propDAL} will be proven in Section~\ref{section2} via 
an explicit computation. The construction of the flows in the way above is known and can also be found 
in e.g.~\cite{GHMT}.

Introduce the {\it loop operators} as follows:
\beq
\nabla_-(\lambda) \:= \sum_{p\geq0} \frac{D_{-,p}}{\lambda^{-p+1}} \,, \quad 
\nabla_+(\lambda) \:= \sum_{p\geq0} \frac{D_{+,p}}{\lambda^{p+1}} \,. 
\eeq
By using a uniqueness argument given in~\cite{Y} (see the Lemma~3 and the Proposition~2 of~\cite{Y}) 
we will prove in Section~\ref{section2} the following lemma.
\begin{lemma}\label{nablaR}
For $\alpha = \pm$, the following equations are verified:
\begin{align}
&\nabla_-(\mu) \left(R_\alpha(\lambda)\right) \=  -\frac{\lambda}{\mu} \, \frac{\left[R_-(\mu),R_\alpha(\lambda)\right]}{\mu-\lambda} \,-\, 
\left[Q_-(\mu),R_\alpha(\lambda)\right]\,, \label{nabla-R}\\
&\nabla_+(\mu) \left(R_\alpha(\lambda)\right) \=  \frac{\left[R_+(\mu),R_\alpha(\lambda)\right]}{\mu-\lambda} 
\,-\, \left[Q_+(\mu),R_\alpha(\lambda)\right]\, , \label{nabla+R}
\end{align}
where $Q_{\pm}(\mu)$ are given by
\begin{align}
Q_{-}(\mu) \:= \frac1{\mu} \left(\begin{array}{cc} a_-(\mu) & b_-(\mu) \\ 0 & 0 \end{array}\right), \quad Q_{+}(\mu) \:= \frac1\mu \left(\begin{array}{cc} 0 & 0\\ c_+(\mu) & -a_+(\mu) \end{array}\right)\,.
\end{align}
\end{lemma}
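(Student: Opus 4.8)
The plan is to follow the uniqueness-based strategy from~\cite{Y}. Fix $\alpha\in\{+,-\}$ and think of the right-hand sides of~\eqref{nabla-R}--\eqref{nabla+R} as candidate expressions for $\nabla_\mp(\mu)(R_\alpha(\lambda))$. First I would record the two structural facts that pin down a matrix resolvent up to the obvious ambiguity: by~\eqref{eqres} a resolvent is determined by its $\Lambda$-conjugation behaviour with $U(\lambda)$, and by Lemma~\ref{mrdef12} the basic resolvents are the unique ones with the prescribed leading term and with $\operatorname{tr}R_\alpha=1$, $\det R_\alpha=0$. Since each $D_{\beta,p}$ is admissible (commutes with $\Lambda$) and kills $\lambda$ and $\mu$, applying $\nabla_\mp(\mu)$ to~\eqref{eqres} and using the defining relation~\eqref{DAL} for $D_{\beta,p}$ shows that $\nabla_\mp(\mu)(R_\alpha(\lambda))$ again satisfies a linear equation of resolvent type; the point is to identify it with the stated commutator expression.

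The key computation is the following. Write $X$ for the proposed right-hand side, e.g.\ for the $-$ case $X=-\tfrac{\lambda}{\mu}\tfrac{[R_-(\mu),R_\alpha(\lambda)]}{\mu-\lambda}-[Q_-(\mu),R_\alpha(\lambda)]$. I would verify three things. (i) \emph{$X$ has the correct analytic type}: it lies in $\mathrm{Mat}(2,\mathcal A[[\lambda]])$ (resp.\ in $\lambda^{-1}$-series for $\alpha=+$) as a series in $\mu$ of the form $\sum_{p\ge0}\mu^{-p+1}(\cdots)$ matching $\nabla_-(\mu)$; the $\tfrac{1}{\mu-\lambda}$ and the $\tfrac1\mu$ in $Q_-$ are exactly what is needed so that the apparent pole at $\mu=\lambda$ cancels and the expansion in the appropriate variable is a genuine power series. (ii) \emph{$X$ satisfies the linearized resolvent equation}: $\Lambda(X)U-UX=\nabla_-(\mu)(\Lambda(R_\alpha)U-UR_\alpha)+\Lambda(\nabla_-(\mu)U)\,R_\alpha-R_\alpha\,\nabla_-(\mu)U$, and using~\eqref{DAL} to substitute $\nabla_-(\mu)U=\sum_p\mu^{-p+1}(\Lambda(V_{-,p})U-UV_{-,p})$ together with the fact that $\Lambda(R_\alpha)U=UR_\alpha$, this collapses — after a Jacobi-identity manipulation — to the statement that $X$ conjugates correctly; the matrices $V_{-,p}$ reassemble, modulo the upper-triangular correction in~\eqref{vplus}, into $\big(\mu^{-p}R_-(\mu)\big)_{\le0}$, and the correction term is precisely what produces $Q_-(\mu)$. (iii) \emph{$X$ respects the trace/determinant normalization}: differentiating $\operatorname{tr}R_\alpha=1$ gives $\operatorname{tr}\nabla_-(\mu)(R_\alpha)=0$, which holds for $X$ since it is a sum of commutators; differentiating $R_\alpha^2=R_\alpha$ (equivalent to $\operatorname{tr}=1$, $\det=0$ for a $2\times2$ matrix) gives $\nabla_-(\mu)(R_\alpha)R_\alpha+R_\alpha\nabla_-(\mu)(R_\alpha)=\nabla_-(\mu)(R_\alpha)$, a compatibility that one checks directly for $X$ using $R_\alpha^2=R_\alpha$ and $[R_-(\mu),\,\cdot\,]$, $[Q_-,\,\cdot\,]$ identities. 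Once (i)--(iii) hold, the uniqueness part of Lemma~\ref{mrdef12} (in the form used in~\cite[Lemma~3, Proposition~2]{Y}) forces $\nabla_-(\mu)(R_\alpha(\lambda))=X$, coefficient by coefficient in $\mu$. The $+$ case is identical with $R_+$, $Q_+$ and the lower-triangular correction from~\eqref{vminus}, and with the $-\tfrac{\lambda}{\mu}$ factor replaced by $1$ because $\nabla_+(\mu)$ carries $\mu^{-p-1}$ rather than $\mu^{-p+1}$.

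The main obstacle I anticipate is bookkeeping in step (i)/(ii): tracking how the generating series $\sum_p\mu^{-p+1}V_{-,p}(\lambda)$ (and its $+$ analogue) resum into $R_-(\mu)$ and $Q_-(\mu)$, and verifying that the spurious pole at $\mu=\lambda$ genuinely cancels on both sides so that the identity makes sense as a formal power series in the correct variable. This is where the precise definitions~\eqref{vplus}--\eqref{vminus} of $V_{\alpha,p}$, including the seemingly ad hoc triangular subtractions, do their work, and it is worth isolating as a preliminary computation; the Jacobi-identity step in (ii) and the normalization checks in (iii) are then short. I would also note that one should first establish the identity for $\alpha$ equal to the same sign as the loop operator (where $R_\alpha$ and $R_-(\mu)$, say, are expanded in the same variable) and then treat the mixed sign, since the cancellation of the $\tfrac{1}{\mu-\lambda}$ pole is slightly different in the two situations.
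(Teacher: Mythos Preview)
Your proposal is correct and follows essentially the same uniqueness argument as the paper: the paper calls the left-hand side $W(\lambda,\mu)$, records that it lies in $\mathcal A\otimes sl_2(\mathbb C)[[\lambda^{-1},\mu^{-1}]]\mu^{-1}$ and satisfies the linearized resolvent equation together with $\operatorname{tr}(WR_\alpha+R_\alpha W)=0$, observes that a solution of these constraints in that analytic class is unique, and then checks that the commutator expression on the right-hand side satisfies the same constraints. Your steps (i)--(iii) are exactly this, with the only cosmetic difference that you differentiate $R_\alpha^2=R_\alpha$ rather than $\operatorname{tr}(R_\alpha^2)=1$; the paper also leaves the ``bookkeeping'' of resumming $\sum_p \mu^{\pm p\mp1}V_{\pm,p}(\lambda)$ into $R_\pm(\mu)$ and $Q_\pm(\mu)$ entirely implicit, whereas you rightly flag it as the main thing to write out.
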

\noindent Expanding the right-hand sides of~\eqref{nabla-R} and~\eqref{nabla+R} in power series, 
we see that these two equations 
can be reformulated as follows:
\begin{align}\label{equivdr}
D_{\beta,q} \, R_\alpha(\lambda) \= \left[V_{\beta,q}(\lambda),R_\alpha(\lambda)\right]\,, \quad \alpha,\beta = \pm\,, \, p,q \geq 0\,.
\end{align}

By using Lemma~\ref{nablaR} or, equivalently,~\eqref{equivdr},  we will also prove in Section~\ref{section2} the following
\begin{prop}\label{PropCommuting}
	The derivations $D_{\alpha,p},\, \alpha = \pm, p \geq 0$, are mutually commuting. We call~\eqref{DAL} the ``abstract AL Hierarchy'' and $D_{\alpha,p}$ the ``AL derivations''.
\end{prop}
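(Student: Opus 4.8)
The plan is to derive the commutativity $[D_{\alpha,p}, D_{\beta,q}] = 0$ from the compatibility of the zero-curvature equations~\eqref{DAL}, using the action of the derivations on the basic matrix resolvents recorded in~\eqref{equivdr}. Since the derivations are admissible (they commute with $\Lambda$) and annihilate $\lambda$, and since they are determined by their action on $q_n$ and $r_n$, it suffices to show that $[D_{\alpha,p}, D_{\beta,q}]$ annihilates $U(\lambda)$; indeed $U$ contains $q_n$ and $r_n$ linearly in its entries together with the constant $\lambda$, so $[D_{\alpha,p},D_{\beta,q}] U = 0$ forces $[D_{\alpha,p},D_{\beta,q}] q_n = [D_{\alpha,p},D_{\beta,q}] r_n = 0$, and then the same holds on all of $\A[[\lambda]]$ by the Leibniz rule and admissibility.

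First I would apply $D_{\beta,q}$ to the defining relation $D_{\alpha,p} U = \Lambda(V_{\alpha,p}) U - U V_{\alpha,p}$ and subtract the expression obtained by swapping $(\alpha,p) \leftrightarrow (\beta,q)$. Using that $D_{\beta,q}$ is a derivation commuting with $\Lambda$, the left-hand side becomes $[D_{\alpha,p}, D_{\beta,q}] U$ plus cross terms of the form $\Lambda(D_{\beta,q} V_{\alpha,p}) U + \Lambda(V_{\alpha,p})(D_{\beta,q}U) - (D_{\beta,q}U)V_{\alpha,p} - U(D_{\beta,q}V_{\alpha,p})$, and symmetrically. After substituting $D_{\beta,q} U$ from its own zero-curvature equation, the terms quadratic in the $V$'s reorganize into commutator brackets, and the standard computation collapses the whole thing to
\begin{equation}
[D_{\alpha,p},D_{\beta,q}] \, U \= \Lambda\bigl(W_{\alpha\beta}\bigr)\, U \,-\, U\, W_{\alpha\beta}\,, \qquad W_{\alpha\beta} \:= D_{\alpha,p} V_{\beta,q} \,-\, D_{\beta,q} V_{\alpha,p} \,+\, \bigl[V_{\alpha,p}, V_{\beta,q}\bigr]\,.
\end{equation}
So the commutativity reduces to showing that $W_{\alpha\beta} = 0$, i.e. that the $V_{\alpha,p}$ satisfy the zero-curvature (Zakharov--Shabat) compatibility condition among themselves.

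To prove $W_{\alpha\beta}=0$ I would go back to the basic matrix resolvents $R_\pm$. By construction $V_{\alpha,p}$ differs from the appropriate truncation of $\lambda^{\mp p} R_\alpha(\lambda)$ by an explicit lower/upper-triangular matrix built from the coefficients $a_{\alpha,p}, b_{\alpha,p}, c_{\alpha,p}$; these corrections are precisely what is captured by the operators $Q_\pm$ in Lemma~\ref{nablaR} once one packages the $V_{\alpha,p}$ into the loop operators $\nabla_\pm$. Using~\eqref{equivdr} to evaluate $D_{\alpha,p} V_{\beta,q}$ and $D_{\beta,q} V_{\alpha,p}$ — which amounts to computing $D_{\alpha,p} R_\beta(\lambda) = [V_{\alpha,p}(\lambda), R_\beta(\lambda)]$ and then truncating and correcting — together with the Jacobi identity for the matrix commutator $[\cdot,\cdot]$, I would show that $W_{\alpha\beta}$ lies in the common kernel of two conditions: it must commute with $R_\gamma(\lambda)$ for the relevant $\gamma$, and it must vanish in the appropriate (polynomial-part) truncation. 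The generating-function form of Lemma~\ref{nablaR}, with the bracket structure $-\frac{\lambda}{\mu}\frac{[R_-(\mu),R_\alpha(\lambda)]}{\mu-\lambda} - [Q_-(\mu),R_\alpha(\lambda)]$ and its $(+)$-counterpart, makes this bookkeeping systematic: one checks that the $\nabla_\beta(\mu)$-derivative of $\nabla_\alpha(\lambda)(R_\gamma)$ is symmetric under $(\alpha,\lambda)\leftrightarrow(\beta,\mu)$ by a direct manipulation of rational-function identities (partial fractions in $\mu-\lambda$) and the Jacobi identity, which is exactly the statement that the flows commute on $R_\gamma$ and hence $W_{\alpha\beta}=0$.

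The main obstacle I anticipate is the last step: verifying the symmetry of the double loop-operator action, because the "twist" terms $[Q_\pm(\mu), R_\alpha(\lambda)]$ do not by themselves commute with $R$'s in an obvious way, so one cannot simply invoke a clean Lax-pair Jacobi identity. One has to track the non-commutator pieces carefully, use the normalization identities $\tr R_\alpha = 1$, $\det R_\alpha = 0$ (equivalently $\tr(R_\alpha^2)=1$) to control the triangular corrections, and exploit the fact that $Q_\pm(\mu)$ are themselves extracted from $R_\pm(\mu)$ by the same truncation rule, so that the mixed terms telescope. This is precisely the place where the uniqueness argument of~\cite{Y} (its Lemma~3 and Proposition~2), already invoked for Lemma~\ref{nablaR}, does the heavy lifting: it lets one conclude $W_{\alpha\beta}=0$ from the facts that $[W_{\alpha\beta}, L] = 0$ in the appropriate sense and that $W_{\alpha\beta}$ has the right triangular/truncation normalization, without an explicit term-by-term cancellation.
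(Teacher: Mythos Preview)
Your proposal is correct and is essentially a fleshed-out version of the paper's (very terse) second method: the paper simply says that the commutativity ``can then be obtained using~\eqref{nablaUp}--\eqref{nablaUn} and~\eqref{nabla-R}--\eqref{nabla+R}'', which is exactly your plan of reducing $[D_{\alpha,p},D_{\beta,q}]U=0$ to the Zakharov--Shabat quantity $W_{\alpha\beta}$ and then killing $W_{\alpha\beta}$ via~\eqref{equivdr} (the coefficient form of Lemma~\ref{nablaR}). The paper additionally mentions, without details, a first alternative---a bare-hands verification using the explicit flow formulas~\eqref{d2rn}--\eqref{d1qn} together with~\eqref{eqres}---which you do not pursue; otherwise the approaches coincide.
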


Let us proceed with the definition of the tau-structure for the AL hierarchy. 

\begin{defi} \label{defiOmega}
For any $\alpha,\beta = \pm$ and $p,q \geq 0$, define the polynomial $\Omega_{\alpha,p;\beta,q}\in \A$ by the generating series 
\begin{align}
&\sum_{p,q\geq 0} \Omega_{+,p;+,q} \, \lambda^{-p-1} \mu^{-q-1} \= \frac{{\rm tr} \left(R_+ (\lambda) R_+ (\mu)\right)}{(\lambda-\mu)^2} 
\,-\, \frac{1}{(\lambda-\mu)^2}\,, \label{Omdef1}\\
&\sum_{p,q\geq 0} \Omega_{+,p;-,q} \, \lambda^{-p-1} \mu^{q-1} 
\= - \frac{{\rm tr} \left(R_+ (\lambda) R_- (\mu)\right)}{(\lambda-\mu)^2} \+ \frac1{\lambda^2}\quad (|\lambda|>|\mu|)\,, \label{Omdef2}\\
&\sum_{p,q\geq 0} \Omega_{-,p;+,q} \, \lambda^{p-1} \mu^{-q-1} \= - \frac{{\rm tr} \left(R_- (\lambda) R_+ (\mu)\right)}{(\lambda-\mu)^2} 
\+ \frac1{\mu^2} \quad (|\mu|>|\lambda|)\,, \label{Omdef3}\\
&\sum_{p,q\geq 0} \Omega_{-,p;-,q} \, \lambda^{p-1} \mu^{q-1} \= \frac{{\rm tr} \, \bigl(R_- (\lambda) R_- (\mu)\bigr)}{(\lambda-\mu)^2} 
\,-\, \frac{1}{(\lambda-\mu)^2} \,. \label{Omdef4}
\end{align}
\end{defi}
\noindent 
The fact that equations ~\eqref{Omdef1}--\eqref{Omdef4}  give a proper definition of the polynomials $\Omega_{\alpha,p;\beta,q}$ 
is proven in Section~\ref{sectionAL}. The defining equations~\eqref{Omdef1}--\eqref{Omdef4}  can be written more succinctly as
\begin{align}
& \sum_{p,q\geq 0} \frac{ \Omega_{\alpha,p;\beta,q}} 
{\lambda^{\alpha p+1} \mu^{\beta q+1}} \= 
\alpha\beta \left(
\frac{{\rm tr} \left(R_{\alpha}(\lambda) R_{\beta}(\mu) \right)}
{\left(\lambda-\mu\right)^2} 
 \,-\,  r_{\alpha,\beta}(\lambda,\mu)\right) \,, \label{twopointOm}
\end{align}
where $r_{+,+}(\lambda,\mu)=r_{-,-}(\lambda,\mu)=1/(\lambda-\mu)^2$, $r_{+,-}(\lambda,\mu)=1/\lambda^2$, $r_{-,+}(\lambda,\mu)=1/\mu^2$.

\begin{lemma} \label{taustructure}
For any $\alpha,\beta = \pm$ and $p,q\geq0$, the polynomials $\Omega_{\alpha,p;\beta,q}$ and $a_{\alpha,p} \in \mathcal A$ satisfy the following equations:
\begin{align}
& \Omega_{\alpha,p;\beta,q} \= \Omega_{\beta,q;\alpha,p}  \,, \label{taustruc1} \\
& D_{\gamma,r} \left(\Omega_{\alpha,p;\beta,q}\right) \= D_{\beta,q} \left(\Omega_{\alpha,p;\gamma,r}\right) \,, \label{taustruc2}\\
& (\Lambda-1) \left(\Omega_{\alpha,p;\beta,q}\right) \= \alpha \, D_{\beta,q} \left(a_{\alpha,p}\right) \,. \label{taustruc3}
\end{align}
\end{lemma}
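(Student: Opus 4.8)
The plan is to derive all three identities from the generating-series formula~\eqref{twopointOm} together with the matrix-resolvent flow equation~\eqref{equivdr}, the resolvent equation~\eqref{eqres}, and the normalizations $\tr R_\alpha^2 = 1$, $\det R_\alpha = 0$. I would work entirely at the level of the generating functions $\sum \Omega_{\alpha,p;\beta,q}\lambda^{-\alpha p-1}\mu^{-\beta q-1}$ and only extract coefficients at the very end, so that each of the four $(\alpha,\beta)$-cases is handled by one clean computation rather than index bookkeeping.

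First, for the symmetry~\eqref{taustruc1}: since $\tr(R_\alpha(\lambda)R_\beta(\mu)) = \tr(R_\beta(\mu)R_\alpha(\lambda))$ and the rational prefactors $r_{\alpha,\beta}$ are manifestly symmetric under simultaneously swapping $(\alpha,\lambda)\leftrightarrow(\beta,\mu)$, the right-hand side of~\eqref{twopointOm} is invariant under that swap; comparing coefficients of $\lambda^{-\alpha p-1}\mu^{-\beta q-1}$ gives $\Omega_{\alpha,p;\beta,q}=\Omega_{\beta,q;\alpha,p}$. One must be slightly careful that in the mixed cases ($\alpha\neq\beta$) the two series expansions in~\eqref{Omdef2}--\eqref{Omdef3} are taken in complementary domains $|\lambda|>|\mu|$ vs.\ $|\mu|>|\lambda|$, so the symmetry statement really compares $\Omega_{+,p;-,q}$ with $\Omega_{-,q;+,p}$ coming from the two different expansions of $-\tr(R_+R_-)/(\lambda-\mu)^2$; the agreement of their polynomial coefficients is what the lemma asserts and follows because both are the unique polynomial in $\A$ with the prescribed expansion, once~\eqref{twopointOm} is read as a formal identity in $\A[[\lambda^{\mp1},\mu^{\mp1}]]$.

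For~\eqref{taustruc2}, I would apply the derivation $D_{\gamma,r}$ to~\eqref{twopointOm}. Since $D_{\gamma,r}$ annihilates $\lambda$ and $\mu$ and acts trivially on the rational $c$-number $r_{\alpha,\beta}$, it commutes past the prefactor, and $D_{\gamma,r}\tr(R_\alpha(\lambda)R_\beta(\mu)) = \tr\bigl([V_{\gamma,r}(\lambda),R_\alpha(\lambda)]R_\beta(\mu)\bigr) + \tr\bigl(R_\alpha(\lambda)[V_{\gamma,r}(\mu),R_\beta(\mu)]\bigr)$ by~\eqref{equivdr}. Summing $r$ against the appropriate power of a third variable $\nu$ turns $V_{\gamma,r}$ into (the relevant projection of) $R_\gamma(\nu)$, and the resulting expression, after using cyclicity of the trace and the partial-fraction identity relating $\tfrac{1}{(\lambda-\mu)^2}$-type kernels, becomes visibly symmetric under $(\beta,q,\mu)\leftrightarrow(\gamma,r,\nu)$. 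That symmetry is exactly~\eqref{taustruc2}. The computation is the same one that appears in~\cite{BDY16,DY17,Y} for other hierarchies; the only new feature is keeping track of the four sign/domain conventions encoded in~\eqref{twopointOm}, and the subtraction terms $Q_\pm$ in~\eqref{nabla-R}--\eqref{nabla+R} (equivalently the $\binom{a\;b}{0\;0}$/$\binom{0\;0}{c\;-a}$ corrections in~\eqref{vplus}--\eqref{vminus}), which contribute terms that either cancel in pairs or are themselves symmetric.

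For~\eqref{taustruc3}, the idea is to apply $(\Lambda-1)$ to~\eqref{twopointOm} and use the resolvent equation $\Lambda(R_\alpha)U = U R_\alpha$, i.e.\ $\Lambda(R_\alpha(\lambda)) = U(\lambda)R_\alpha(\lambda)U(\lambda)^{-1}$, which gives $\tr(\Lambda(R_\alpha(\lambda))\Lambda(R_\beta(\mu))) = \tr\bigl(U(\lambda)R_\alpha(\lambda)U(\lambda)^{-1}U(\mu)R_\beta(\mu)U(\mu)^{-1}\bigr)$; this is \emph{not} immediately $\tr(R_\alpha R_\beta)$ because $U(\lambda)\neq U(\mu)$. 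One then rewrites $U(\lambda)^{-1}U(\mu)$ and $U(\mu)^{-1}U(\lambda)$ as $1$ plus something of order $(\lambda-\mu)$, and uses the fact (to be read off from~\eqref{un}, and the structure of $R_\alpha$) that the $\mathcal O(\lambda-\mu)$ correction pairs against the double pole $(\lambda-\mu)^{-2}$ to leave a simple pole whose residue is computed via the flow equations. The cleanest route is: show $(\Lambda-1)$ applied to the generating series of $\Omega_{\alpha,p;\beta,q}$ equals the generating series of $\alpha\,D_{\beta,q}(a_{\alpha,p})$ by identifying both with the same two-variable expression built from $\tr(R_\alpha(\lambda)[V_{\beta,q}(\mu),R_\alpha(\lambda)])$-type terms — more precisely, one uses that $D_{\beta,q}a_{\alpha,p}$ is the $(1,1)$-entry (up to sign) of $D_{\beta,q}R_\alpha(\lambda) = [V_{\beta,q}(\lambda),R_\alpha(\lambda)]$ and that $(\Lambda-1)\Omega$ can be expressed through the resolvent equation as a telescoping trace. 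I expect this last identity~\eqref{taustruc3} to be the main obstacle, because it genuinely mixes the spatial shift $\Lambda$ with the flow derivations and requires the precise form of the $\lambda\to\mu$ singular parts; the $\pm$ normalizations~\eqref{normalize11}--\eqref{normalize22} (which pin down the ``constant'' matrices $\binom{0\;r_{n-1}}{0\;1}$ and $\binom{1\;0}{q_{n-1}\;0}$) must be used to kill the would-be extra terms, and one should cross-check the overall sign $\alpha$ on a low-order example such as $(\alpha,p)=(-,0)$, where $a_{-,0}=0$ forces $(\Lambda-1)\Omega_{-,0;\beta,q}=0$, consistent with $b_{-,0}=r_{n-1}$ and the explicit low-order resolvent data given in the excerpt.
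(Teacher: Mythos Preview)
Your approach is correct and essentially the same as the paper's: the paper also works at the level of generating series, applying $\nabla_\gamma(\nu)$ (your summed $D_{\gamma,r}$) to~\eqref{twopointOm} and using Lemma~\ref{nablaR} plus trace cyclicity to reduce~\eqref{taustruc2} to the manifestly symmetric expression $\alpha\beta\gamma\,\tr\bigl(R_\alpha(\lambda)[R_\gamma(\nu),R_\beta(\mu)]\bigr)/\bigl((\lambda-\mu)(\mu-\nu)(\nu-\lambda)\bigr)$, and for~\eqref{taustruc3} applies $(\Lambda-1)$ via~\eqref{eqres} and $\tr R_\alpha=1$ to obtain an explicit off-diagonal formula that is then matched against $\alpha\,\nabla_\beta(\mu)\bigl(\lambda^{-1}a_\alpha(\lambda)\bigr)$ using~\eqref{nabla-R}--\eqref{nabla+R}. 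The only difference is that the paper carries these computations out explicitly rather than sketching them, and in particular the $Q_\pm$ terms you flag do cancel exactly by $\tr\bigl([Q,R_\alpha]R_\beta+R_\alpha[Q,R_\beta]\bigr)=0$.
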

The property~\eqref{taustruc1} obviously follows from~\eqref{Omdef1}--\eqref{Omdef4}. 
The proof for~\eqref{taustruc2}--\eqref{taustruc3} is given in Section~\ref{sectionAL}. Following \cite{DZ-norm}, we call the set of polynomials $$\{ a_{\alpha,p}, \Omega_{\alpha,p;\beta,q},\quad \alpha = \pm, p,q\geq 0 \} \subseteq \mathcal A$$ 
the {\it tau-structure} for the AL hierarchy. For $k\geq 3$, let us also define inductively
\beq
\Omega_{\alpha_{k},p_{k};\dots;\alpha_1,p_1} \:= D_{\alpha_{k},p_{k}}  \Omega_{\alpha_{k-1},p_{k-1};\dots;\alpha_1,p_1} \,.
\eeq
Using Lemma~\ref{taustructure}, it is clear that, for any $k \geq 2$, $\Omega_{\alpha_k,p_k;\dots;\alpha_1,p_1}$ is invariant by permutations of the pairs of indices $(\alpha_i,p_i)$, $i=1,\dots,k$.
The following theorem is the main result of this paper.
\begin{theorem}\label{main1}
For every $k\geq3$, and for any fixed $\alpha_1,\dots,\alpha_k\in\{+,-\}$, 
the generating series of $\Omega_{\alpha_1,p_1;\dots;\alpha_k,p_k}$ has the following expression:
\begin{align}
& \sum_{i_1,\dots,i_k\geq 0} \frac{ \Omega_{\alpha_1,i_1;\dots;\alpha_k,i_k}} 
{\lambda_1^{\alpha_1 i_1+1}\cdots \lambda_k^{\alpha_k i_k+1}} \= 
- \prod_{j=1}^k \alpha_j \sum_{S_k/C_k} 
\frac{{\rm tr} \left(R_{\alpha_{\sigma(1)}}(\lambda_{\sigma(1)}) \cdots R_{\alpha_{\sigma(k)}}(\lambda_{\sigma(k)}) \right)}
{\prod_{j=1}^k \left(\lambda_{\sigma(j)}-\lambda_{\sigma(j+1)}\right)} \,, \label{mainid}
\end{align}
where $C_k$ denotes the cyclic group.
\end{theorem}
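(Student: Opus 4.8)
The plan is to prove~\eqref{mainid} by induction on $k$, the engine being the loop operators $\nabla_{\pm}$ together with the commutation formulas of Lemma~\ref{nablaR}. Write $F_k(\lambda_1,\dots,\lambda_k)$ for the left-hand side of~\eqref{mainid} and $G_k(\lambda_1,\dots,\lambda_k)$ for its right-hand side, so that $\nabla_{\alpha}(\lambda)=\sum_{p\ge 0}D_{\alpha,p}\,\lambda^{-\alpha p-1}$ in either case $\alpha=\pm$. Comparing, for each $i_k$, the monomial $\lambda_k^{-\alpha_k i_k-1}$ occurring in $F_k$ with the corresponding term of $\nabla_{\alpha_k}(\lambda_k)$, using the permutation invariance of $\Omega_{\alpha_1,i_1;\dots;\alpha_k,i_k}=D_{\alpha_k,i_k}\Omega_{\alpha_1,i_1;\dots;\alpha_{k-1},i_{k-1}}$ and the fact that $D_{\alpha_k,i_k}$ annihilates $\lambda_1,\dots,\lambda_{k-1}$ (Proposition~\ref{propDAL}), one sees at once that $F_k=\nabla_{\alpha_k}(\lambda_k)\,F_{k-1}$ for all $k\ge 2$. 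Hence the theorem reduces to the purely resolvent-theoretic identity $\nabla_{\alpha_k}(\lambda_k)\,G_{k-1}=G_k$ for $k\ge 3$, plus a base case: by~\eqref{twopointOm} one has $F_2=G_2-\alpha_1\alpha_2\,r_{\alpha_1,\alpha_2}(\lambda_1,\lambda_2)$, and since $r_{\alpha_1,\alpha_2}$ is a rational function of $\lambda_1,\lambda_2$ with rational coefficients, it is annihilated by $\nabla_{\alpha_3}(\lambda_3)$, so $F_3=\nabla_{\alpha_3}(\lambda_3)\,G_2=G_3$ by the resolvent identity; for $k\ge 4$ one gets $F_k=\nabla_{\alpha_k}(\lambda_k)\,F_{k-1}=\nabla_{\alpha_k}(\lambda_k)\,G_{k-1}=G_k$ from the inductive hypothesis $F_{k-1}=G_{k-1}$ and the resolvent identity. (This also explains the hypothesis $k\ge 3$: the ``genus-zero'' term $r_{\alpha,\beta}$ contributes only to the two-point function and is killed by a single AL derivation.)

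To establish $\nabla_{\alpha_k}(\lambda_k)\,G_{k-1}=G_k$, fix a set of representatives $\tau\in S_{k-1}/C_{k-1}$ and apply the derivation $\nabla_{\alpha_k}(\lambda_k)$ to each summand of $G_{k-1}$ by the Leibniz rule: the rational prefactor $\prod_i(\lambda_{\tau(i)}-\lambda_{\tau(i+1)})^{-1}$ is a constant for this derivation, while on the trace one obtains a sum over slots $i=1,\dots,k-1$ of traces in which $R_{\alpha_{\tau(i)}}(\lambda_{\tau(i)})$ is replaced by $\nabla_{\alpha_k}(\lambda_k)\bigl(R_{\alpha_{\tau(i)}}(\lambda_{\tau(i)})\bigr)$. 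Substituting Lemma~\ref{nablaR} with $\mu=\lambda_k$, each replacement yields a ``$Q$-part'' $-[Q_{\alpha_k}(\lambda_k),R_{\alpha_{\tau(i)}}(\lambda_{\tau(i)})]$ and a ``commutator part'' proportional to $[R_{\alpha_k}(\lambda_k),R_{\alpha_{\tau(i)}}(\lambda_{\tau(i)})]$ with an explicit rational coefficient in $\lambda_k,\lambda_{\tau(i)}$ (namely $1/(\lambda_k-\lambda_{\tau(i)})$ when $\alpha_k=+$, and $-\lambda_{\tau(i)}/\bigl(\lambda_k(\lambda_k-\lambda_{\tau(i)})\bigr)$ when $\alpha_k=-$). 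Summing the $Q$-parts over $i$ and using cyclicity of the trace gives $-\tr\bigl[Q_{\alpha_k}(\lambda_k),\,R_{\alpha_{\tau(1)}}(\lambda_{\tau(1)})\cdots R_{\alpha_{\tau(k-1)}}(\lambda_{\tau(k-1)})\bigr]=0$, so the $Q$-parts drop out entirely.

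For the commutator parts, expanding $[R_{\alpha_k}(\lambda_k),R_{\alpha_{\tau(i)}}(\lambda_{\tau(i)})]$ inside the trace inserts the new factor $R_{\alpha_k}(\lambda_k)$ either immediately before or immediately after the $i$-th slot; regroup the double sum over $(\tau,i)$ by collecting the two contributions that land in the same cyclic gap of the word. The pair combines, after the elementary partial-fraction identity $\tfrac1{(a-\lambda_k)(\lambda_k-b)}=\tfrac1{a-b}\bigl(\tfrac1{a-\lambda_k}+\tfrac1{\lambda_k-b}\bigr)$ and (when $\alpha_k=-$) a one-line simplification of the extra factor $\lambda_{\tau(i)}/\lambda_k$, which cancels against $\lambda_k$, into a single term whose rational coefficient is exactly $\pm\prod_{j=1}^{k}(\lambda_{\sigma(j)}-\lambda_{\sigma(j+1)})^{-1}$ (indices mod $k$), where $\sigma\in S_k/C_k$ is the cyclic order obtained from $\tau$ by inserting $k$ into that gap. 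Invoking the standard bijection between pairs (a cyclic order on $\{1,\dots,k-1\}$, a choice of one of its $k-1$ gaps) and cyclic orders on $\{1,\dots,k\}$, together with the sign identity $\prod_{j=1}^{k-1}\alpha_j=\alpha_k\prod_{j=1}^{k}\alpha_j$, the sum collapses precisely to $G_k$, completing the induction. As in Definition~\ref{defiOmega}, both sides are understood as formal series in $\lambda_1^{\pm1},\dots,\lambda_k^{\pm1}$, the sign of the exponent of $\lambda_i$ dictated by $\alpha_i$, after each rational summand on the right is expanded in the appropriate domain; this is the same convention already established in Section~\ref{sectionAL} for the two-point functions and requires nothing new.

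The main obstacle is entirely in the third step: reassembling the double sum over $(\tau,i)$ into a single sum over $S_k/C_k$, checking that the rational coefficients from Lemma~\ref{nablaR} together with the prefactor $\prod_i(\lambda_{\tau(i)}-\lambda_{\tau(i+1)})^{-1}$ of $G_{k-1}$ collapse to the cyclic denominator $\prod_j(\lambda_{\sigma(j)}-\lambda_{\sigma(j+1)})^{-1}$, and keeping every sign — those from $(\lambda-\mu)\leftrightarrow(\mu-\lambda)$ in the commutators and those from the $\alpha_j=\pm1$ — under control. The cases $\alpha_k=+$ and $\alpha_k=-$ run in parallel, the only difference being the harmless factor $\lambda_{\tau(i)}/\lambda_k$; by contrast the telescoping of the $Q$-parts and the reduction of the whole statement to $\nabla_{\alpha_k}(\lambda_k)\,G_{k-1}=G_k$ are routine.
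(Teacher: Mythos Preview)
Your proof is correct and follows essentially the same route as the paper's: induction on $k$, with the base case $k=3$ obtained by applying $\nabla_{\alpha_3}(\lambda_3)$ to the two-point generating series (the paper does this computation explicitly in \eqref{DO1}--\eqref{DO2} within the proof of Lemma~\ref{taustructure}), and the inductive step carried out by applying $\nabla_{\alpha_k}(\lambda_k)$ to $G_{k-1}$ via the Leibniz rule and Lemma~\ref{nablaR}, then simplifying with cyclicity of the trace. You spell out more of the mechanics --- the vanishing of the $Q$-parts by telescoping, the partial-fraction regrouping, and the bijection between gapped cyclic orders on $k-1$ letters and cyclic orders on $k$ --- whereas the paper compresses this step and refers to~\cite{BDY16} for the details; but the argument is the same.
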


If we view $q_n$ and $r_n$ as functions of $n$ and ${\bf s}=(s^{\alpha,p})_{\alpha=\pm,p\geq0}$, and 
replace $D_{\alpha,p}$ in~\eqref{DAL} by $\p/\p s^{\alpha,p}$, we obtain a commuting family of 
evolutionary differential--difference equations, the {\it AL hierarchy}. 
Let $\bigl(q_n=q(n,{\bf s}), r_n=r(n,{\bf s})\bigr)$ be an arbitrary solution to this hierarchy. 
Following the scheme of Dubrovin and Zhang~\cite{DZ-norm}, 
we know from Lemma~\ref{taustructure} that there exists a function $\tau(n,{\bf s})$ satisfying
\begin{align}
&  \frac{\p^2 \log \tau(n,{\bf s})}{\p s^{\alpha,p} \p s^{\beta,q}} \= \Omega_{\alpha,p;\beta,q}(n + 1, {\bf s}) \,,\label{taufunction1}\\
& (\Lambda-1) \left(\frac{\p \log \tau(n,{\bf s})}{\p s^{\alpha,p}}\right) \= \alpha \, a_{\alpha,p}(n + 1, {\bf s}) \,, 
\label{taufunction2}\\
& \frac{\tau(n+1,{\bf s}) \, \tau(n-1,{\bf s})}{\tau(n,{\bf s})^2} \= 1 \,-\, q_{n}({\bf s}) \, r_{n}({\bf s}) \,.\label{taufunction3}
\end{align}
We call $\tau(n,{\bf s})$ {\it the} tau-function of the solution $\bigl(q_n=q(n,{\bf s}), r_n=r(n,{\bf s})\bigr)$, although 
it is uniquely determined by~\eqref{taufunction1}--\eqref{taufunction3} up to multiplying by a factor of the form 
\beq
\exp \Biggl\{c_0 n \+ \sum_{\alpha=\pm\atop p\geq 0} c_{\alpha,p}s^{\alpha,p} \Biggr\} \,.
\eeq
\begin{defi}
Let $\bigl(q_n=q(n,{\bf s}), r_n=r(n,{\bf s})\bigr)$ be an arbitrary solution to the AL hierarchy, and 
$\tau = \tau(n,{\bf s})$ the tau-function of the solution.  Let $\big\{(\alpha_\ell, i_\ell), \ell = 1,\ldots,k \big\}$ be 
an arbitrary set of multi-indices, with $\alpha_j \in \{ + , -\}, \; i_\ell \geq 0$ for every $\ell = 1,\ldots,k$. We define the {\it $k$-point correlation functions} $\langle\langle \tau_{\alpha_1,i_1}\dots\tau_{\alpha_k,i_k}\rangle\rangle(n,{\bf s})$ of the solution by 
\beq
\langle\langle \tau_{\alpha_1,i_1}\dots\tau_{\alpha_k,i_k}\rangle\rangle(n,{\bf s}) 
\:= \frac{\p^k\log\tau(n,{\bf s})}{\p s^{\alpha_1,i_1} \dots \p s^{\alpha_k,i_k}} \,,\quad k\geq1\,.
\eeq
\end{defi}
From the above definitions, it is clear that 
\beq\label{llrrOm}
 \langle\langle \tau_{\alpha_1,i_1}\dots\tau_{\alpha_k,i_k} \rangle\rangle(n,{\bf s}) \= 
\Omega_{\alpha_1,i_1; \dots ;\alpha_k,i_k}(n + 1,{\bf s}) \,, \quad k\geq2\,.
\eeq

From formula~\eqref{twopointOm}, Theorem~\ref{main1}, and~\eqref{llrrOm} we arrive at the following  

\begin{cor}\label{main2}
Let $\bigl(q_n=q(n,{\bf s}), r_n=r(n,{\bf s})\bigr)$ be an arbitrary solution to the AL hierarchy. 
For every $k\geq2$, and for any fixed $\alpha_1,\dots,\alpha_k\in\{+,-\}$, 
the generating series of $k$-point correlation functions of the solution has the following expression:
\begin{align}
& \sum_{i_1,\dots,i_k\geq 0} \frac{\langle\langle \tau_{\alpha_1,i_1}\cdots\tau_{\alpha_k,i_k}\rangle\rangle(n,{\bf s})} 
{\lambda_1^{\alpha_1 i_1+1}\cdots \lambda_k^{\alpha_k i_k+1}} \nn\\
&=\; - \, \prod_{j=1}^k \alpha_j \Biggl(\sum_{S_k/C_k} 
\frac{
{\rm tr} \left(R_{\alpha_{\sigma(1)}}(\lambda_{\sigma(1)},n + 1, {\bf s}) \cdots R_{\alpha_{\sigma(k)}}(\lambda_{\sigma(k)},n + 1, {\bf s}) \right)}
{\prod_{j=1}^k \left(\lambda_{\sigma(j)}-\lambda_{\sigma(j+1)}\right)} 
\,-\, \delta_{k,2} \, r_{\alpha_1,\alpha_2}(\lambda_1,\lambda_2)\Biggr) \,, \label{main2eq}
\end{align}
where $R_\beta(\lambda,n,{\bf s})$ is defined by the substitution of $(q_n=q(n,{\bf s}), r_n=r(n,{\bf s}))$ into $R_\beta(\lambda)$, 
and $r_{\alpha,\beta}(\lambda,\mu)$ are defined right after~\eqref{twopointOm}.
\end{cor}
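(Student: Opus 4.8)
The plan is to obtain \eqref{main2eq} by specializing to the chosen solution the three polynomial/formal-series identities already at our disposal: formula \eqref{twopointOm}, the multi-point formula \eqref{mainid} of Theorem~\ref{main1}, and the relation \eqref{llrrOm}; no new computation beyond bookkeeping is required. First I would isolate the mechanism behind the specialization. Substituting an arbitrary solution $\bigl(q_n=q(n,{\bf s}),\, r_n=r(n,{\bf s})\bigr)$ of the AL hierarchy for the generators $q_{n+i},r_{n+i}$ of $\mathcal A$ is a ring homomorphism under which $\Lambda$ becomes the lattice shift $n\mapsto n+1$ and, by the very way the AL hierarchy is built from the abstract one \eqref{DAL}, each AL derivation $D_{\alpha,p}$ becomes $\partial/\partial s^{\alpha,p}$. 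Since $a_{\alpha,p}$, $\Omega_{\alpha,p;\beta,q}$ and all entries of $R_{\pm}(\lambda)$ are polynomials in the $q_{n+i},r_{n+i}$, every identity among them that holds in $\mathcal A[[\lambda^{\pm 1}]]$ survives this specialization verbatim, with $R_\beta(\lambda)$ replaced by $R_\beta(\lambda,n,{\bf s})$ and all generating series now read as formal Laurent series with smooth coefficients; the only thing to watch is to keep the expansion domains $|\lambda|>|\mu|$, $|\mu|>|\lambda|$ built into \eqref{Omdef2}--\eqref{Omdef3} consistent throughout, so that no convergence issue arises.

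Next I would re-derive \eqref{llrrOm}, which is the bridge between the $k$-point correlation functions and the polynomials $\Omega_{\alpha_1,i_1;\dots;\alpha_k,i_k}$. For $k=2$ it is exactly \eqref{taufunction1}. For $k\geq 3$ one argues by induction on $k$: after the above specialization the inductive definition $\Omega_{\alpha_k,p_k;\dots;\alpha_1,p_1}=D_{\alpha_k,p_k}\,\Omega_{\alpha_{k-1},p_{k-1};\dots;\alpha_1,p_1}$ becomes the statement that $\partial/\partial s^{\alpha_k,p_k}$ applied to $\Omega_{\alpha_{k-1},i_{k-1};\dots;\alpha_1,i_1}(n+1,{\bf s})$ equals $\Omega_{\alpha_k,i_k;\dots;\alpha_1,i_1}(n+1,{\bf s})$, so differentiating the $(k-1)$-point identity once more yields $\partial^k\log\tau/\partial s^{\alpha_1,i_1}\cdots\partial s^{\alpha_k,i_k}=\Omega_{\alpha_1,i_1;\dots;\alpha_k,i_k}(n+1,{\bf s})$; the permutation-invariance of the left-hand side is consistent with that of $\Omega$ recorded after Lemma~\ref{taustructure}.

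Finally I would split into two cases. For $k\geq 3$: substitute \eqref{llrrOm} into the left-hand side of \eqref{mainid} and evaluate the right-hand side of \eqref{mainid} at the solution at the lattice point $n+1$; this is precisely \eqref{main2eq} with the Kronecker term absent. For $k=2$: combine \eqref{llrrOm} with \eqref{twopointOm} evaluated at the solution at $n+1$, and rewrite the denominator $(\lambda_1-\lambda_2)^2$ appearing there in the cyclic form $(\lambda_{\sigma(1)}-\lambda_{\sigma(2)})(\lambda_{\sigma(2)}-\lambda_{\sigma(1)})$ associated to the unique coset $\sigma\in S_2/C_2$; after collecting the overall sign $-\prod_j\alpha_j$, what remains is exactly the $k=2$ instance of \eqref{main2eq}, the summand $-\delta_{k,2}\,r_{\alpha_1,\alpha_2}(\lambda_1,\lambda_2)$ being the correction that converts the two-variable trace expression into the genuine two-point function \eqref{twopointOm}. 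This whole argument is bookkeeping rather than anything deep, so I do not anticipate a real obstacle; the single step that truly needs care is this $k=2$ reconciliation, where the overall sign $-\prod_j\alpha_j$ together with the two-variable form of the cyclic denominator of \eqref{mainid} must be matched precisely against the correction term in \eqref{twopointOm}.
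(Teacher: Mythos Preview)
Your proposal is correct and follows exactly the route the paper itself indicates: the Corollary is obtained by combining \eqref{twopointOm}, Theorem~\ref{main1}, and \eqref{llrrOm} after specializing to the given solution, with the $k=2$ case handled separately via \eqref{twopointOm}. Your added discussion of the specialization homomorphism and the inductive justification of \eqref{llrrOm} merely makes explicit what the paper leaves as a one-line remark, and your flagging of the $k=2$ sign bookkeeping is appropriate, as that is indeed the only place where the cyclic-denominator form of \eqref{mainid} does not literally coincide with \eqref{twopointOm} and a correction term must be inserted.
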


Generating functions containing both infinitely many 
negative and positive powers of~$\lambda$ appeared in~\cite{GGR, GGR2} in the computations of  
LUE and JUE correlators via the {\it Riemann--Hilbert approach} 
(cf.~also~\cite{BR21, Du2}). \\

The paper is organized as follows. In Section~\ref{section2}, we prove Lemma~\ref{mrdef12}. 
In Section~\ref{sectionAL}, we prove Lemma~\ref{taustructure} and Theorem~\ref{main1}. 
An application related to Toeplitz determinants and integrals over the unitary is given in Section~\ref{section4}.

\medskip

\noindent {\bf Acknowledgements.} 
We are grateful to our teacher and friend Boris Dubrovin, 
who initiated this project.
M.C. acknowledges the support
of the H2020-MSCA-RISE-2017 PROJECT No. 778010 IPaDEGAN and the International Research
Project PIICQ, funded by CNRS. D.Y. acknowledges the support of the National Key R and D Program of China 2020YFA0713100 and NSFC 12061131014.

\section{Matrix-resolvents and the AL hierarchy}\label{section2}

In this section, we prove Lemma~\ref{mrdef12}, Propositions \ref{propDAL} and \ref{PropCommuting}. 

\begin{proof}[Proof of Lemma~\ref{mrdef12}]
As in~\eqref{R1expand}--\eqref{R2expand}, let us write 
\begin{align}
& R_-(\lambda) \= \left(\begin{array}{cc} a_-(\lambda) & b_-(\lambda) \\ c_-(\lambda) & 1-a_-(\lambda)\end{array}\right)   \,,  \nn\\
& R_+(\lambda) \= \left(\begin{array}{cc} 1+a_+(\lambda) & b_+(\lambda)\\ c_+(\lambda) & -a_+(\lambda)\end{array}\right)   \,.\nn
\end{align}
Substituting these expressions into \eqref{eqres}, \eqref{normalize11}, \eqref{normalize12}, 
\eqref{normalize21}, \eqref{normalize22},  we find
\begin{align}
&  (\Lambda-1) a_-(\lambda) \+ q_n \, \Lambda b_-(\lambda) \,-\,  \lambda^{-1} \, r_n \, c_-(\lambda) \= 0 \,, \label{a11}\\
&  r_n \, (\Lambda+1) a_-(\lambda) \+ (\Lambda-\lambda) b_-(\lambda) \= r_n \,, \label{a12}\\  
&  q_n \, (\Lambda+1) a_-(\lambda) \,-\, ( \Lambda-\lambda^{-1}) c_-(\lambda) \= q_n \,, \label{a13}\\
&  (\Lambda-1) a_-(\lambda) \,-\, r_n \, \Lambda c_-(\lambda) \+ \lambda \, q_n \, b_-(\lambda) \= 0 \,, \label{a14}\\
& a_-(\lambda) \,-\, a_-(\lambda)^2 \,-\, b_-(\lambda) \, c_-(\lambda) \= 0 \,, \label{a15}
\end{align}
and
\begin{align}
&  (\Lambda-1) a_+(\lambda) \+ q_n \, \Lambda b_+(\lambda) \,-\,  \lambda^{-1} \, r_n \, c_+(\lambda) \= 0 \,,\\
&  r_n \, (\Lambda+1) a_+(\lambda) \+ (\Lambda-\lambda) b_+(\lambda)   \= - r_n \,,\\ 
&  q_n \, (\Lambda+1) a_+(\lambda)  \,-\, \big( \Lambda-\lambda^{-1}\big) (c_+(\lambda))   \= - q_n\,,\\
&  (\Lambda-1) a_+(\lambda) \,-\, r_n \, \Lambda c_+(\lambda) \+ \lambda \, q_n \, b_+(\lambda)  \= 0 \,,\\
& a_+(\lambda) \+ a_+(\lambda)^2 \+ b_+(\lambda) \, c_+(\lambda) \= 0 \,.
\end{align}
Uniqueness for $R_-(\lambda)$ follows easily from~\eqref{a11}--\eqref{a15} and~\eqref{normalize11}, 
where we note that equations~\eqref{a11}--\eqref{a14} and~\eqref{normalize11} 
already determine $R_-(\lambda)$ up to integration constants, and~\eqref{a15} fixes uniquely the constants; 
the existence follows from a careful verification of the compatibility between~\eqref{a11}--\eqref{a15}.

Similar verifications are true for $R_+(\lambda)$. The lemma is proved. 
\end{proof}

\begin{proof}[Proof of Proposition~\ref{propDAL}]
By a straightforward computation, the $(1,1)$ term of the equation~\eqref{DAL} implies that $D_{\alpha,p} \, \lambda = 0$ for $\alpha = \pm$ and for any $p \geq 0$. Moreover, using the Leibniz rule and equations \eqref{eqres}, \eqref{R1expand}, \eqref{R2expand}, we also obtain 
that the entries $(1,2)$ and $(2,1)$ of ~\eqref{DAL} 
can be equivalently written as
\begin{align}
& D_{+,p} (r_n) \=  r_n \, \Lambda(a_{+,p}) \+ \Lambda(b_{+,p}) \+ \delta_{p,0} \, r_n\,, \label{d2rn} \\
& D_{+,p} (q_n) \= q_n \, \Lambda(a_{+,p}) \,-\, \Lambda(c_{+, p}) \,, \label{d2qn} \\
& D_{-,p} (r_n) \= -\, r_n \, \Lambda(a_{-,p}) \,-\, \Lambda(b_{-,p}) \,, \label{d1rn} \\
& D_{-,p} (q_n) \= - \, q_n \, \Lambda(a_{-,p}) \+ \Lambda(c_{-,p}) \+ \delta_{p,0} \, q_n\,. \label{d1qn}
\end{align}
and these equations show that \eqref{DAL} defines indeed how the derivations $\{ D_{\alpha,p}, \alpha=\pm, p\geq 0 \}$ act on the variables $r_n$ and $q_n$. Since the right hand side of equations \eqref{d2rn}--\eqref{d1qn} do not depend on $\lambda$,  we also proved that we can restrict these derivations to the polynomial algebra~$\mathcal A$.
\end{proof}

Let us proceed and prove Lemma~\ref{nablaR}.
\begin{proof}[Proof of Lemma~\ref{nablaR}] 
We will prove identity~\eqref{nabla+R} with $\alpha=+$. The remaining cases can be proved in a similar way. 
Denote by~$W(\lambda,\mu)$ the left-hand side of this 
identity. Clearly, $W(\lambda,\mu)\in \mathcal{A}\otimes sl_2(\CC)[[\lambda^{-1},\mu^{-1}]]\mu^{-1}$. It follows straightforwardly from~\eqref{eqres} and~\eqref{MRprod1122} that $W(\lambda,\mu)$ satisfies the following two equations:
\begin{align}
& \Lambda(W(\lambda,\mu)) \, U(\lambda) \+ \Lambda(R_+(\lambda)) \, \nabla_+(\mu) (U(\lambda)) \,-\, 
\nabla_+(\mu) (U(\lambda)) \,-\, U(\lambda) \, W(\lambda,\mu) \= 0\,, \label{g12231}\\
& \tr \, \bigl( W(\lambda,\mu) \, R_+(\lambda) \+ R_+(\lambda) \, W(\lambda,\mu) \bigr) \= 0\,. \label{g12232}
\end{align}
One can verify that, if there exists a solution  
to~\eqref{g12231}--\eqref{g12232} belonging to 
$\mathcal{A}\otimes sl_2(\CC)[[\lambda^{-1},\mu^{-1}]]\mu^{-1}$, then it is unique. Identity~\eqref{nabla+R} is then proved by 
verifying that its right-hand side also satisfies ~\eqref{g12231}--\eqref{g12232} and belongs to 
$\mathcal{A}\otimes sl_2(\CC)[[\lambda^{-1},\mu^{-1}]]\mu^{-1}$. 
\end{proof}

Now let us prove Proposition~\ref{PropCommuting}.

\begin{proof}[Proof of Proposition~\ref{PropCommuting}]
The proposition can be proven in two different ways. On the one hand, it can be proven using  equations ~\eqref{d2rn}--\eqref{d1qn} and~\eqref{eqres} 
and straightforward computations.\\
On the other hand, equations \eqref{nabla-R}, \eqref{nabla+R} yield
\begin{align}
& \nabla_+(\mu) (U(\lambda)) \= \frac{\Lambda(R_+(\mu)) U(\lambda)-U(\lambda)R_+(\mu)}{\mu-\lambda} 
\,-\, \Lambda(Q_+(\mu))U(\lambda)+U(\lambda)Q_+(\mu)\,, \label{nablaUp} \\
& \nabla_-(\mu) (U(\lambda)) \= -\frac{\lambda}{\mu}\frac{\Lambda(R_-(\mu)) U(\lambda)-U(\lambda)R_-(\mu)}{\mu-\lambda}
\,-\, \Lambda(Q_-(\mu))U(\lambda)+U(\lambda)Q_-(\mu) \,. \label{nablaUn}
\end{align}
The commutativity between the derivations ${D_{\alpha,p}, \, \alpha = \pm, p \geq 0}$ can then be obtained using~\eqref{nablaUp}--\eqref{nablaUn} and~\eqref{nabla-R}--\eqref{nabla+R}.
\end{proof}

We note that 
it follows from~\eqref{nablaUp}--\eqref{nablaUn} and~\eqref{eqres} that for $\alpha\in\{\pm\}$, 
\begin{align}
& \nabla_\alpha(\mu) (r_n) \= \alpha \, \frac{r_n}{\mu} \, \Lambda(a_{\alpha}(\mu)) \+ \alpha \, 
\frac{1}{\mu} \, \Lambda(b_{\alpha}(\mu)) \+ \delta_{\alpha,+} \, \frac{r_n}{\mu} \,, \label{nablaalpharn} \\
& \nabla_\alpha(\mu) (q_n) \= \alpha \, \frac{q_n}{\mu} \, \Lambda(a_{\alpha}(\mu)) \,-\, \alpha \, \frac1{\mu} \Lambda(c_{\alpha}(\mu)) 
\+ \delta_{\alpha,-} \, \frac{q_n}{\mu} \,. \label{nablaalphaqn} 
\end{align}
and, of course, these equations are equivalent to ~\eqref{d2rn}--\eqref{d1qn}.

For the reader's convenience, we list the first few examples of the flows in the AL hierarchy, obtained by replacing $D_{\alpha,p}$ by $\p/\p s^{\alpha,p}$:
\begin{align}
& \frac{\p r_n}{\p s^{-,0}}  \= -r_n \,,  & \frac{\p q_n}{\p s^{-,0}} & \= q_n \,, & \\
& \frac{\p r_n}{\p s^{+,0}} \= r_n \,,  & \frac{\p q_n}{\p s^{+,0}} & \= -q_n \,, & \\
& \frac{\p r_n}{\p s^{-,1}} \= - \, r_{n-1} \+ r_n \, q_n \, r_{n-1} \,,  & \frac{\p q_n}{\p s^{-,1}} & \= q_{n+1} \,-\, r_n \, q_n \, q_{n+1} \,, & \\
& \frac{\p r_n}{\p s^{+,1}} \= r_{n+1} \,-\, r_n \, q_n \, r_{n+1} \,,  & \frac{\p q_n}{\p s^{+,1}} & \= - \, q_{n-1} \+ r_n \, q_n \, q_{n-1} \,. &
\end{align}
The combination 
\beq
\frac{\p }{\p t} \:= \frac{\p}{\p s^{+,1}} \,-\, \frac{\p }{\p s^{-,1}} \,-\, \frac{\p}{\p s^{+,0}} \+ \frac{\p}{\p s^{-,0}}
\eeq
yields the complexified form of the AL equation
\begin{align}
&\frac{\p r_n}{\p t} \= r_{n+1} \,-\, 2 \, r_n \+ r_{n-1} \,-\, r_n \, q_n \, (r_{n+1}+r_{n-1})\,, \\
&\frac{\p q_n}{\p t} \= -q_{n+1} \+ 2 \, q_n \,-\, q_{n-1} \+ r_n \, q_n \, (q_{n+1}+q_{n-1}) \,.
\end{align}
This is the original AL equation~\cite{AL1,AL2}, which is a discretization of the celebrated 
nonlinear Schr\"odinger equation.

\section{From matrix resolvents to tau-functions}\label{sectionAL}
In this section, we prove Lemma~\ref{taustructure} and Theorem~\ref{main1}. 

Let us first show that the $\Omega_{\alpha,p;\beta,q}$ are well defined by~\eqref{Omdef1}--\eqref{Omdef4}. 
Write
\beq\label{mellitid}
R_\alpha(\mu) \= R_\alpha(\lambda) \+ 
R_\alpha ' (\lambda)(\mu-\lambda) \+ (\mu-\lambda)^2 \p_\lambda 
\left(\frac{R_\alpha(\lambda)-R_\alpha(\mu)}{\lambda-\mu}\right).
\eeq
Using this identity and equality~\eqref{MRprod1122}, 
we find that the right-hand side of~\eqref{Omdef1} equals
\begin{align}
&\frac{{\rm tr} \left(R_+ (\lambda) R_+ (\mu)\right)}{(\lambda-\mu)^2} 
\,-\, \frac{1}{(\lambda-\mu)^2} \nn\\
&\qquad  \= - \frac{{\rm tr} \left (R_+ (\lambda) R_+ ' (\lambda) \right)}{\lambda-\mu} \+ 
{\rm tr} \left(R_+ (\lambda) \p_\lambda 
\left(\frac{R_+(\lambda)-R_+(\mu)}{\lambda-\mu}\right)\right).  \label{mellitproof}
\end{align}
Differentiating equality~\eqref{MRprod1122} with respect to~$\lambda$ we obtain that
 ${\rm tr} \left(R_+ (\lambda) R_+ ' (\lambda)\right)$ vanishes. 
From~\eqref{normalize21} it is clear that the second term of the right-hand side of~\eqref{mellitproof} 
belongs to $\lambda^{-2}\mu^{-1}\mathcal{A}[[\lambda^{-1},\mu^{-1}]]$. 
Since the left-hand side of~\eqref{mellitproof} is symmetric by exchange of $\lambda$ and $\mu$, 
we obtain that the right-hand side of~\eqref{mellitproof} 
belongs to $\lambda^{-2}\mu^{-2}\mathcal{A}[[\lambda^{-1},\mu^{-1}]]$. 
Therefore, we have verified that $\Omega_{+,p;+,q}$ are well defined by~\eqref{Omdef1}, and that $\Omega_{+,p;+,0}$ 
and $\Omega_{+,0;+,q}$ vanishes. 
In a similar way, one can verify that $\Omega_{-,p;-,q}$ are well defined by~\eqref{Omdef4}, 
with vanishing $\Omega_{-,p;-,0}$ and $\Omega_{-,0;-,q}$. 
Using~\eqref{normalize21} and~\eqref{normalize11} one could easily deduce that 
$\Omega_{+,p;-,q}$ are well defined by~\eqref{Omdef2},  
with vanishing $\Omega_{+,0;-,q}$, $\Omega_{+,p;-,0}$. For $\Omega_{-,p;+,q}$, by~\eqref{Omdef2}, 
the proof is the same as~that for $\Omega_{+,p;-,q}$ due to the cyclicity of the trace.

Let us proceed with proving Lemma~\ref{taustructure}.
\begin{proof}[Proof of Lemma~\ref{taustructure}] We have
\begin{align}
& \nabla_+(\nu)\left(\sum_{p,q\geq 0} \frac{\Omega_{\alpha,p;\beta,q}}{\lambda^{\alpha p+1} \mu^{\beta q+1}}\right) \nn\\
& \= 
\alpha\beta \, \frac{{\rm tr} \left(\nabla_+(\nu)(R_\alpha (\lambda)) R_\beta (\mu)+ R_\alpha (\lambda)\nabla_+(\nu)(R_\beta (\mu)) \right)}{(\lambda-\mu)^2} \nn\\
& \= 
\alpha\beta \, \frac{{\rm tr} \left(\left(\frac{[R_+(\nu),R_\alpha(\lambda)]}{\nu-\lambda}-[Q_+(\nu),R_\alpha(\lambda)] \right)R_\beta (\mu) 
+ R_\alpha (\lambda) \left(\frac{[R_+(\nu),R_\beta(\mu)]}{\nu-\mu} - [Q_+(\nu),R_\beta(\mu)]\right) \right)}{(\lambda-\mu)^2} \nn\\
& \=  
\alpha\beta \, \frac{{\rm tr} \left(\left(-\frac{1}{\nu-\lambda}  
+ \frac1{\nu-\mu}\right) R_\alpha(\lambda) [R_+(\nu),R_\beta(\mu)] \right)}{(\lambda-\mu)^2} \nn\\
& \= 
\alpha\beta \, \frac{{\rm tr} \left(R_\alpha (\lambda) [R_+(\nu),R_\beta(\mu)] \right)}{(\lambda-\mu)(\mu-\nu)(\nu-\lambda)} \,.  \label{DO1}
\end{align}
Similarly, 
\begin{align}
& \nabla_-(\nu)\left(\sum_{p,q\geq 0} \frac{\Omega_{\alpha,p;\beta,q}}{\lambda^{\alpha p+1} \mu^{\beta q+1}}\right) \nn\\
& \= 
\alpha\beta \, \frac{{\rm tr} \left(\nabla_-(\nu)(R_\alpha(\lambda)) R_\beta (\mu)+ R_\alpha (\lambda)\nabla_-(\nu)(R_\beta (\mu)) \right)}{(\lambda-\mu)^2} \nn\\
& \= 
\alpha\beta \, \frac{{\rm tr} \left(\left( -\frac{\lambda}{\nu} \frac{[R_-(\nu),R_\alpha(\lambda)]}{\nu-\lambda}-[Q_-(\nu),R_\alpha(\lambda)] \right) R_-(\mu) 
+ R_-(\lambda) \left( -\frac{\mu}{\nu} \frac{[R_-(\nu),R_\beta(\mu)]}{\nu-\mu} - [Q_-(\nu),R_\beta(\mu)]\right) \right)}{(\lambda-\mu)^2} \nn\\
& \= 
\alpha\beta \, \frac{{\rm tr} \left(\left(\frac{\lambda}{\nu} \frac{1}{\nu-\lambda}  
- \frac{\mu}{\nu} \frac1{\nu-\mu}\right) R_\alpha(\lambda) [R_-(\nu),R_\beta(\mu)] \right)}{(\lambda-\mu)^2} \nn\\
& \= -\, \alpha\beta \, 
\frac{{\rm tr} \left(R_\alpha (\lambda) [R_-(\nu),R_\beta(\mu)] \right)}{(\lambda-\mu)(\mu-\nu)(\nu-\lambda)} \,. \label{DO2}
\end{align}
From~\eqref{DO1} and~\eqref{DO2} we see the validity of~\eqref{taustruc2}. 

To show~\eqref{taustruc3}, on the one hand, we have
\begin{align}
& (\Lambda-1) \left(\sum_{p,q\geq 0} \frac{\Omega_{\alpha,p;\beta,q}}{\lambda^{\alpha p+1} \mu^{\beta q+1}}\right)  \nn\\
& \= \alpha\beta \, \frac{{\rm tr} \left(\Lambda(R_\alpha (\lambda)) \Lambda(R_\beta (\mu)) - R_\alpha (\lambda)R_\beta (\mu)\right)}{(\lambda-\mu)^2} \nn\\
& \= \alpha\beta \frac{R_{\alpha,12}(\lambda) R_{\beta,21}(\mu)}{\mu (\lambda-\mu)}- \alpha\beta\frac{R_{\alpha,21}(\lambda) R_{\beta,12}(\mu)}{\lambda (\lambda-\mu)}\,.
\end{align}
where in the last equality we used~\eqref{eqres} and the property ${\rm tr} \left(R(\lambda)\right)=1$. On the other hand, 
for $\beta=+$, using~\eqref{nabla+R} we have 
\begin{align}
\alpha \, \nabla_+(\mu) \left(\frac1{\lambda} \, a_\alpha(\lambda)\right) \= 
\alpha \, \frac{R_{\alpha,12}(\lambda) R_{\beta,21}(\mu)}{\mu (\lambda-\mu)} \,-\,\alpha \, \frac{R_{\alpha,21}(\lambda) R_{\beta,12}(\mu)}{\lambda (\lambda-\mu)}\,;
\end{align}
for $\beta=-$, using~\eqref{nabla-R} we have 
\begin{align}
\alpha \, \nabla_-(\mu) \left(\frac1{\lambda} \, a_\alpha(\lambda)\right) \= - \alpha \, \frac{R_{\alpha,12}(\lambda) R_{\beta,21}(\mu)}{\mu (\lambda-\mu)} \+\alpha \, \frac{R_{\alpha,21}(\lambda) R_{\beta,12}(\mu)}{\lambda (\lambda-\mu)}\,.
\end{align}
Identity~\eqref{taustruc3} is proved. 
\end{proof}
We are ready to prove Theorem~\ref{main1}.
\begin{proof}[Proof of Theorem~\ref{main1}]
Let us prove the theorem by induction.
For $k=3$, the validity of identity~\eqref{mainid} was established in the proof of Lemma~\ref{taustructure}. 
Suppose~\eqref{mainid} is true for $k=m$ $(m\geq3)$; consider now $k=m+1$. For the case that $\alpha_{m+1}=+$, we have
\begin{align}
& - \sum_{i_1,\dots,i_m,i_{m+1}\geq 0} 
\frac{\Omega_{\alpha_1,i_1;\dots;\alpha_m,i_m;+,i_{m+1}}} 
{\lambda_1^{\alpha_1 i_1+1}\cdots\lambda_m^{\alpha_m i_m+1}\lambda_{m+1}^{i_{m+1}+1}} \nn\\
& \= 
\nabla_+(\lambda_{m+1}) \left(\prod_{j=1}^m \alpha_j \sum_{\sigma\in S_m/C_m} 
\frac{{\rm tr} \left(R_{\alpha_{\sigma(1)}}(\lambda_{\sigma(1)}) \cdots R_{\alpha_{\sigma(m)}}(\lambda_{\sigma(m)}) \right)}
{\prod_{j=1}^m \left(\lambda_{\sigma(j)}-\lambda_{\sigma(j+1)}\right)}\right) \nn\\
& \= 
\prod_{j=1}^m \alpha_j \, \sum_{\sigma\in S_m/C_m} 
\frac1{\prod_{j=1}^m \left(\lambda_{\sigma(j)}-\lambda_{\sigma(j+1)}\right)} \sum_{l=1}^m \,
{\rm tr} \, \Bigl(R_{\alpha_{\sigma(1)}}(\lambda_{\sigma(1)}) 
\cdots R_{\alpha_{\sigma(l-1)}}(\lambda_{\sigma(l-1)}) \nn\\
& \qquad\qquad \cdot \nabla_+(\lambda_{m+1})  (R_{\alpha_{\sigma(l)}}(\lambda_{\sigma(l)})) \cdot R_{\alpha_{\sigma(l+1)}}(\lambda_{\sigma(l+1)}) \cdots R_{\alpha_{\sigma(m)}}(\lambda_{\sigma(m)}) \Bigr) \nn\\
& \= 
\prod_{j=1}^m \alpha_j \, \sum_{\sigma\in S_m/C_m} 
\frac1{\prod_{j=1}^m \left(\lambda_{\sigma(j)}-\lambda_{\sigma(j+1)}\right)} \sum_{l=1}^m \, 
{\rm tr} \, \Biggl(R_{\alpha_{\sigma(1)}}(\lambda_{\sigma(1)}) 
\cdots R_{\alpha_{\sigma(l-1)}}(\lambda_{\sigma(l-1)}) \nn\\
& \qquad \cdot \biggl(\frac{[R_+(\lambda_{m+1}),R_{\alpha_{\sigma(l)}}(\lambda_{\sigma(l)})]}{\lambda_{m+1}-\lambda_{\sigma(l)}} - 
[Q_+(\lambda_{m+1}),R_{\alpha_{\sigma(l)}}(\lambda_{\sigma(l)})]
\biggr)  \nn\\
& \qquad \cdot R_{\alpha_{\sigma(l+1)}}(\lambda_{\sigma(l+1)}) \cdots R_{\alpha_{\sigma(m)}}(\lambda_{\sigma(m)}) \Biggr)\,. 
\end{align}
Here in the last equality we used~\eqref{nabla+R}. Simplifying this expression with the help of the cyclic invariance of ${\rm tr}$ we obtain the 
validity of~\eqref{mainid} with $k=m+1$ (see the discussions after (4.23) of~\cite{BDY16}). 
Similarly, for the case $\alpha_{m+1}=-$, we have
\begin{align}
& - \sum_{i_1,\dots,i_m,i_{m+1}\geq 0} 
\frac{\Omega_{\alpha_1,i_1;\dots;\alpha_m,i_m;-,i_{m+1}}} 
{\lambda_1^{\alpha_1 i_1+1}\cdots\lambda_m^{\alpha_m i_m+1}\lambda_{m+1}^{-i_{m+1}+1}} \nn\\
& \= 
\nabla_-(\lambda_{m+1}) \left(\prod_{j=1}^m \alpha_j \sum_{\sigma\in S_m/C_m} 
\frac{{\rm tr} \left(R_{\alpha_{\sigma(1)}}(\lambda_{\sigma(1)}) \cdots R_{\alpha_{\sigma(m)}}(\lambda_{\sigma(m)}) \right)}
{\prod_{j=1}^m \left(\lambda_{\sigma(j)}-\lambda_{\sigma(j+1)}\right)}\right) \nn\\
& \= 
\prod_{j=1}^m \alpha_j \, \sum_{\sigma\in S_m/C_m}  
\frac1{\prod_{j=1}^m \left(\lambda_{\sigma(j)}-\lambda_{\sigma(j+1)}\right)} \sum_{l=1}^m \, 
{\rm tr} \, \Bigl(R_{\alpha_{\sigma(1)}}(\lambda_{\sigma(1)}) 
\cdots R_{\alpha_{\sigma(l-1)}}(\lambda_{\sigma(l-1)}) \nn\\
& \qquad\qquad \cdot \nabla_-(\lambda_{m+1})  (R_{\alpha_{\sigma(l)}}(\lambda_{\sigma(l)})) \cdot R_{\alpha_{\sigma(l+1)}}(\lambda_{\sigma(l+1)}) \cdots R_{\alpha_{\sigma(m)}}(\lambda_{\sigma(m)}) \Bigr) \nn\\
& \= 
\prod_{j=1}^m \alpha_j \, \sum_{\sigma\in S_m/C_m} 
\frac1{\prod_{j=1}^m \left(\lambda_{\sigma(j)}-\lambda_{\sigma(j+1)}\right)} \sum_{l=1}^m \, 
{\rm tr} \, \Biggl(R_{\alpha_{\sigma(1)}}(\lambda_{\sigma(1)}) 
\cdots R_{\alpha_{\sigma(l-1)}}(\lambda_{\sigma(l-1)}) \nn\\
& \qquad \cdot \biggl( -\frac{\lambda_{\sigma(l)}}{\lambda_{m+1}} \, 
\frac{[R_-(\lambda_{m+1}),R_{\alpha_{\sigma(l)}}(\lambda_{\sigma(l)})]}{\lambda_{m+1}-\lambda_{\sigma(l)}} - 
[Q_-(\lambda_{m+1}),R_{\alpha_{\sigma(l)}}(\lambda_{\sigma(l)})]
\biggr)  \nn\\
& \qquad \cdot R_{\alpha_{\sigma(l+1)}}(\lambda_{\sigma(l+1)}) \cdots R_{\alpha_{\sigma(m)}}(\lambda_{\sigma(m)}) \Biggr)\,. 
\end{align}
Here in the last equality we used~\eqref{nabla-R}. Simplifying this expression we obtain the 
validity of~\eqref{mainid} with $k=m+1$. The theorem is proved. 
\end{proof}

For the reader's convenience, let us list 
the first few relations in~\eqref{taufunction1}--\eqref{taufunction2}:
\begin{align}
& \Lambda^{-1} {\frac{\p^2 \log \tau_n}{\p s^{+,1} \p s^{+,1}}}
 \= - q_{n-2} r_n (1- q_{n-1}r_{n-1}) \,, \quad \Lambda^{-1} {\frac{\p^2 \log \tau_n}{\p s^{-,1} \p s^{-,1}}} \=  -q_{n} r_{n-2} (1- q_{n-1}r_{n-1})\,, \label{exampleomega00}\\
& \Lambda^{-1} {\frac{\p^2 \log \tau_n}{\p s^{+,1} \p s^{-,1}}}\= 1-q_{n-1}r_{n-1} \,, \label{exampleomega1}\\
& \Lambda^{-1} (\Lambda-1) \left(\frac{\p \log \tau_n}{\p s^{+,1}}\right) \= -q_{n-1} r_n \,, \quad
\Lambda^{-1} (\Lambda-1) \left(\frac{\p \log \tau_n}{\p s^{-,1}}\right) \= - q_{n} r_{n-1} \,, 
\label{taufunction2002}\\
& \Lambda^{-1} {\frac{\p^2 \log \tau_n}{\p s^{+,2} \p s^{+,2}}} \nn\\
& \quad \= - (1-q_{n-1} r_{n-1}) \Bigl((1-q_{n-2}r_{n-2}) \bigl( q_{n-3}^2r_{n-2}r_n - q_{n-4} r_n (1-q_{n-3}r_{n-3}) \nn\\
&  \qquad  + 2q_{n-3}(q_{n-2}r_{n-1}r_n+q_{n-1}r_n^2-(1-q_nr_n)r_{n+1})\bigr) \nn\\
&  \qquad  - q_{n-2}\bigl( r_n (q_{n-2}r_{n-1}+q_{n-1}r_n)^2 - (1-q_nr_n) (2q_{n-1}r_nr_{n+1} \nn\\
&  \qquad + 2q_{n-2}r_{n-1}r_{n+1} + q_nr_{n+1}^2-r_{n+2}(1-q_{n+1}r_{n+1}))\bigr)\Bigr)\,. 
\end{align}
Here $\tau_n=\tau(n,{\bf s})$.
The above explicit expressions~\eqref{exampleomega00}--\eqref{taufunction2002} 
agree with 
the {\it unsymmetric identities} given by Adler and van Moerbeke~\cite{AvM2}. 
Clearly, formula~\eqref{taufunction1} gives the explicit generating series 
of all unsymmetric identities of Adler--van Moerbeke type.

In the generic situation, we note that the above relations~\eqref{exampleomega00}--\eqref{taufunction2002} yield 
\begin{align}
& \frac{r_{n+1}}{r_n} \= \tau_{n+1}^2 \frac{(\Lambda-1) \left(\frac{\p \log \tau_n}{\p s^{+,1}}\right)}{\tau_{n+2}\tau_n-\tau_{n+1}^2} \= \frac{\tau_{n+2}\tau_n}{\tau_{n+1}^2}  \frac{(\Lambda-1) \left(\frac{\p \log \tau_{n+1}}{\p s^{-,1}}\right)}{\frac{\p^2 \log \tau_{n+1}}{\p s^{-,1} \p s^{-,1} }} \,,  \label{20211105qr1}\\
& \frac{q_{n+1}}{q_n} \= \tau_{n+1}^2 \frac{(\Lambda-1) \left(\frac{\p \log \tau_n}{\p s^{-,1}}\right)}{\tau_{n+2}\tau_n-\tau_{n+1}^2} \= \frac{\tau_{n+2}\tau_n}{\tau_{n+1}^2} 
\frac{(\Lambda-1) \left(\frac{\p \log \tau_{n+1}}{\p s^{+,1}}\right)}{\frac{\p^2 \log \tau_{n+1}}{\p s^{+,1} \p s^{+,1} }} \,, \label{20211105qr2}
\end{align}
which will be applied in the next section for concrete computations. 
From~\eqref{taufunction3} and~\eqref{exampleomega1} we also note that 
$\tau_n$ satisfies the following differential-difference equation:
\beq\label{ZZZZ}
\frac{\tau_{n+1} \tau_{n-1}}{\tau_n^2} \= 
\frac{\p^2 \log \tau_n}{\p s^{+,1} \p s^{-,1}}\,,
\eeq
which is recognized as the 2-Toda equation~\cite{UT} and could also be applied for concrete computations.

\section{Generating series for correlators related to CUE}\label{section4}
In this section, we study an important class of solutions to the AL hierarchy related to unitary integrals. 

Since the dependence of an AL tau-function on $s^{+,0}$, $s^{-,0}$ is trivial, 
from now on we will denote ${\bf s}=\left(s^{\alpha,p}\right)_{\alpha=\pm, p\geq1}$ and consider only the 
non-trivial flows.
For $n\geq1$, let $Z(n, {\bf s})$ be the formal powers series of~${\bf s}$ 
defined via the following Toeplitz determinants:
\beq\label{toedet}
Z(n,{\bf s}) \= \det \left(\psi_{m-\ell}({\bf s})\right)_{\ell,m = 0}^{n-1} \, , 
\eeq
where
\beq
\psi(\zeta;{\bf s}) \=  {\rm e}^{V(\zeta,\zeta^{-1})} \, {\rm e}^{\sum_{p \geq 1} (s^{+,p}\zeta^p 
+ s^{-,p} \zeta^{-p}) } \; =: \; \sum_{k \in \mathbb Z} \psi_k({\bf s}) \, \zeta^k \, .
\eeq
Here $V(\zeta,\zeta^{-1})$ is a given Laurent series with no constant terms and convergent in a neighborhood of~$S^1$.  
Set $Z(0,{\bf s})\equiv1$. 
It is well known (see for instance~\cite{AvM2, BaikDeiftSuidan}) that these Toeplitz determinants can be written, equivalently, 
as matrix integrals on the unitary group $\mathcal U(n)$ with respect to the Haar measure:
\beq\label{unitaryintegral}
 \det \left(\psi_{m-\ell}({\bf s})\right)_{\ell,m = 0}^{n-1} \= \int_{\mathcal U(n)} {\rm e}^{\mathrm{Tr} \left(V(U,U^*) + \sum_{p \geq 1}(s^{+,p}U^p + s^{-,p}{U^*}^p)\right)} \, \mathrm d U \, .
\eeq
It is shown in~\cite{AvM2} (see also~\cite{Hi, PS}) that $Z(n,{\bf s})$ coincides 
with a particular tau-function for the AL hierarchy for $n\geq0$. To use the 
matrix-resolvent method of computing the logarithmic derivatives of~$Z(n,{\bf s})$, one needs 
to solve the initial data of the solution corresponding to~$Z(n,{\bf s})$.
Suppose from~\eqref{toedet} we can get 
$Z\bigl(n,s^{+,1},0,\dots\bigr)$. Then 
by using \eqref{20211105qr1}--\eqref{20211105qr2} one can obtain 
$q_n\bigl(s^{+,1},0,\dots\bigr)$ and $r_n\bigl(s^{+,1},0,\dots\bigr)$ 
and so one gets $(q_n({\bf 0}),r_n({\bf 0}))$.
With the knowledge of this initial value, formula~\eqref{main2eq} 
then leads to a way of computing the logarithmic derivatives 
of $\log Z(n,{\bf s})$ evaluated at ${\bf s}={\bf 0}$.

As an illustration of the above algorithm, let us give explicit computations for a simple example, 
which is given by  
\beq\label{secondexample}
V(\zeta) \= \log \left((1 + Q \zeta) \left(1 + Q \zeta^{-1}\right)\right) \, ,
\eeq
where $0 \leq Q < 1$ is a parameter. The $Q=0$ case reduces to $V\equiv 0$ and~\eqref{unitaryintegral} gives the 
CUE partition function.
Computing by induction the Toeplitz determinants we find that 
\beq \label{zns}
Z\bigl(n,s^{+,1},0,\dots\bigr) \= \sum_{\ell = 0}^{n} \, (s^{+,1})^\ell \, \frac{Q^\ell}{\ell!} \, \frac{1-Q^{2n+2-2\ell}}{1-Q^2} \,.
\eeq
In particular, 
\beq 
Z(n,{\bf 0}) \= \sum_{k = 0}^{n} Q^{2k} \= \frac{1-Q^{2n+2}}{1-Q^2} \,.
\eeq
We then find from~\eqref{zns} the initial value of the solution that corresponds to the partition function 
$Z(n,{\bf s})$ (as a tau-function for the AL hierarchy), as follows:
\beq
q(n,{\bf 0}) \= (-1)^n Q^n \frac{1-Q^2}{1-Q^{2n+2}}\,, \quad r(n,{\bf 0}) \= (-1)^n Q^n \frac{1-Q^2}{1-Q^{2n+2}} \,.
\eeq

By calculating the basic matrix-resolvents we obtain from 
Corollary~\ref{main2} the following proposition.

\begin{cor} 
For every $k\geq2$, and for any fixed $\alpha_1,\dots,\alpha_k\in\{+,-\}$, 
the following formula holds true:
\begin{align} \label{86}
& \sum_{i_1,\dots,i_k\geq 0} \frac{\langle \tau_{\alpha_1,i_1}\cdots\tau_{\alpha_k,i_k}\rangle(n-1)} 
{\lambda_1^{\alpha_1 i_1+1}\cdots \lambda_k^{\alpha_k i_k+1}} \= - \,
\prod_{j=1}^k \alpha_j \, \Biggl(\sum_{S_k/C_k} 
\frac{{\rm tr} \left(R_{\alpha_{\sigma(1)}}(\lambda_{\sigma(1)}) \cdots R_{\alpha_{\sigma(k)}}(\lambda_{\sigma(k)}) \right)}
{\prod_{j=1}^k \left(\lambda_{\sigma(j)}-\lambda_{\sigma(j+1)}\right)} \nn\\
& \qquad\qquad\qquad\qquad\qquad \,-\, \delta_{k,2} \, r_{\alpha_1,\alpha_2}(\lambda_1,\lambda_2)\Biggr) \,,
\end{align}
where $\langle \tau_{\alpha_1,i_1}\cdots\tau_{\alpha_k,i_k}\rangle(n):=\langle\langle \tau_{\alpha_1,i_1}\cdots\tau_{\alpha_k,i_k}\rangle\rangle(n,{\bf 0})$, $r_{\alpha,\beta}(\lambda,\mu)$ are defined right after~\eqref{twopointOm}, and 
$R_\alpha(\lambda)$ is given by \eqref{R1expand}--\eqref{R2expand} with $a_{\alpha,p}, b_{\alpha,p}, c_{\alpha,p}$, $p\geq0$, being explicitly given by 
\begin{align}
& a_{-,p}(n) \= \frac{(-Q)^{2n-p}(1-Q^2)(1-Q^{2p})}{(1-Q^{2n})(1-Q^{2n+2})} \,,\\
& b_{-,p}(n) \= \frac{(-Q)^{n-p-1}(1-Q^2)(1-Q^{2n+2p+2})}{(1-Q^{2n})(1-Q^{2n+2})} \,,\\
& c_{-,p}(n) \= \delta_{p\geq1} \, \frac{(-Q)^{n-p+1}(1-Q^2)(Q^{2p-2}-Q^{2n})}{(1-Q^{2n})(1-Q^{2n+2})} \,,\\
& a_{+,p}(n) \= - \frac{(-Q)^{2n-p}(1-Q^2) (1-Q^{2p})}{(1-Q^{2n})(1-Q^{2n+2})} \,,\\
& b_{+,p}(n) \= \delta_{p\geq1} \, \frac{(-Q)^{n-p+1}(1-Q^2)(Q^{2p-2}-Q^{2n})}{(1-Q^{2n})(1-Q^{2n+2})} \,,\\
& c_{+,p}(n) \=  \frac{(-Q)^{n-p-1}(1-Q^2)(1-Q^{2n+2p+2})}{(1-Q^{2n})(1-Q^{2n+2})}  \,.
\end{align}
\end{cor}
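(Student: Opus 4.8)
The strategy is to reduce the Corollary entirely to Corollary~\ref{main2}: once the explicit basic matrix resolvents $R_\alpha(\lambda,n,\bf 0)$ are known in closed form, formula~\eqref{86} is nothing but~\eqref{main2eq} evaluated at $\bf s = \bf 0$, with the index shifted from $n+1$ to $n$. So the entire content of the statement is the closed-form evaluation of $a_{\alpha,p}(n)$, $b_{\alpha,p}(n)$, $c_{\alpha,p}(n)$ at the special initial data
\[
q(n,\bf 0) \= r(n,\bf 0) \= (-1)^n Q^n \frac{1-Q^2}{1-Q^{2n+2}} \,,
\]
which was itself obtained above from~\eqref{zns} via the relations~\eqref{20211105qr1}--\eqref{20211105qr2}. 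First I would record, from the $n$-dependence of $q_n=r_n$, the key auxiliary identity $1 - q_n r_n = \tau_{n+1}\tau_{n-1}/\tau_n^2$ with $\tau_n = Z(n,\bf 0) = (1-Q^{2n+2})/(1-Q^2)$, i.e. $1 - q_n r_n = (1-Q^{2n})(1-Q^{2n+4})/(1-Q^{2n+2})^2$ — actually it is cleaner to use~\eqref{taufunction3} directly — and keep it at hand, since it is the combination that appears repeatedly in the recursions for the matrix-resolvent entries.

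\textbf{Key steps.} The computation of the $R_\alpha$ entries proceeds by solving the defining recursions. Concretely, expanding the matrix-resolvent equation~\eqref{eqres} in powers of $\lambda$ and using the normalizations of Lemma~\ref{mrdef12}, one gets (as in~\eqref{a11}--\eqref{a15} and its $R_+$ analogue) a triangular system of difference equations in $n$ for the coefficients $a_{\alpha,p}, b_{\alpha,p}, c_{\alpha,p}$, with the scalar constraint~\eqref{a15} (resp.\ its $+$ counterpart) fixing the integration constants at each level $p$. The plan is: (i) substitute the ansatz $q_n = r_n = (-1)^n Q^n(1-Q^2)/(1-Q^{2n+2})$ into these recursions; (ii) guess the closed forms stated in the Corollary from the $p=0,1$ cases already listed after Lemma~\ref{mrdef12} and from direct low-order computation; (iii) verify the guesses by induction on $p$, checking at each step both the $(1,2)$/$(2,1)$ difference equations and the determinant constraint $\det R_\alpha = 0$ (equivalently $\tr(R_\alpha^2)=1$, cf.~\eqref{MRprod1122}). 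The symmetry $q_n = r_n$ and the near-symmetry between $R_+$ and $R_-$ (exchange of $\lambda \leftrightarrow \lambda^{-1}$ combined with transposition-type operations) should let me treat $R_-$ in detail and obtain $R_+$ by the analogous argument, which is why the stated $a_{+,p} = -a_{-,p}$, $b_{+,p}(n) = c_{-,p}(n)$-type relations hold. Finally, feed the resulting $R_\alpha(\lambda,n)$ into~\eqref{main2eq} with $n \mapsto n-1$ to land on~\eqref{86}; the $\delta_{k,2}$ correction term is carried over verbatim.

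\textbf{Main obstacle.} The conceptual content is light — everything is forced by Corollary~\ref{main2} — so the real work is purely computational: producing and then rigorously verifying the closed-form expressions for all $p \geq 0$. The delicate point is the bookkeeping of the integration constants fixed by the quadratic constraint~\eqref{a15}: one must check that the telescoping of the difference equations, after inserting the explicit $q_n=r_n$, actually collapses to the stated rational functions of $Q^{2n}, Q^{2p}$ with denominators $(1-Q^{2n})(1-Q^{2n+2})$, and in particular that the $c_{\pm,p}$ correctly acquire the factor $\delta_{p\geq1}$ coming from $c_{-,0}=0$, $b_{+,0}=0$. I expect the induction to go through cleanly once the right form is guessed, using the factorization $Q^{2p-2}-Q^{2n} = Q^{2p-2}(1-Q^{2n-2p+2})$ and the identity $1-q_nr_n = (1-Q^{2n})(1-Q^{2n+4})/(1-Q^{2n+2})^2$ to simplify the cross terms; the manipulations are routine but lengthy, so in the paper I would present the final formulas and indicate that their verification is a direct (if tedious) induction against the matrix-resolvent recursion, rather than reproducing every intermediate algebraic step.
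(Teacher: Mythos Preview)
Your proposal is correct and follows essentially the same approach as the paper: the paper simply states ``By calculating the basic matrix-resolvents we obtain from Corollary~\ref{main2} the following proposition,'' and your plan---substitute the explicit initial data $q_n=r_n=(-1)^nQ^n(1-Q^2)/(1-Q^{2n+2})$ into the recursions~\eqref{a11}--\eqref{a15} for the resolvent coefficients, verify the stated closed forms by induction on~$p$, then invoke~\eqref{main2eq} at ${\bf s}={\bf 0}$---is exactly what that sentence encodes. The paper provides no further detail, so your outline in fact contains more of the computation than the paper itself.
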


Observe, in particular, that  
$$
Z(n,{\bf 0}) \= \frac{\mathbb P(X \leq n)}{1 - Q^2} \,,
$$
where $X$ is a geometric random variable:
\begin{equation}\label{geomrandvar}
	\mathbb P(X = k) \= (1 - Q^2) \, Q^{2k}\,, \quad k \geq 0\,.
\end{equation}
More generally, consider
$$
V(\zeta;z,z') \= \log\left((1 + Q\zeta)^z (1 + Q \zeta^{-1})^{z'}\right)\,, 
$$	
where $z,z'$ are positive integer parameters. In this case 
$$
Z(n,{\bf 0}) \= \frac{\mathbb P(L \leq n)}{(1 - Q^2)^{zz'}}\,,
$$
where $L$ is a random variable describing the directed last passage percolation in a rectangular $(z \times z')$ lattice in which each site is equipped with a geometric random variable.
More precisely, take a $(z \times z')$ lattice of integer points, and associate to each point an independent and identically distributed random variable $X_{ij}$ of the same law as in \eqref{geomrandvar}. Then
$$
L \;:=\; \max_{\gamma : (1,1) \nearrow (z,z')} L(\gamma) \,; \quad L(\gamma) \= \sum_{(i,j) \in \gamma} X_{i,j}\,,
$$
where the maximum is taken on all the upright paths going from $(1,1)$ to $(z,z')$ (see for instance \cite{BaikDeiftSuidan}). We plan to come back to this issue in a subsequent publication.

\end{document}